\newcommand{\bHt}{\bH_\mrm{t}}
\newcommand{\bHe}{\bH_\mrm{e}}
\newcommand{\bHr}{\bH_\mrm{r}}
\newcommand{\Nt}{n_\mrm{t}}
\newcommand{\Ne}{n_\mrm{e}}
\newcommand{\Nr}{n_\mrm{r}}
\newcommand{\bOmeg}{{\bar {\boldsymbol{\Omega}}}}
\title{\LARGE \bf The MIMOME Channel}
\author{Ashish Khisti and Gregory Wornell 
\thanks{This work was supported in part by NSF under Grant
No.~CCF-0515109.}  \thanks{The authors are with the Dept. EECS, MIT,
  Cambridge, MA, 
02139. Email:\{khisti,gww\}@mit.edu} }
\DeclareMathOperator*{\argmax}{arg\,max}
\DeclareMathOperator*{\argmin}{arg\,min}
\DeclareMathOperator{\diag}{diag}
\DeclareMathOperator{\tr}{tr}
\newcounter{actr}
{\begin{list}{(\alph{actr})}{\usecounter{actr}}}{\end{list}}
\newcounter{ictr}
{\begin{list}{(\roman{ictr})}{\usecounter{ictr}}}{\end{list}}
\newtheorem{remark}{Remark}
\newtheorem{thm}{Theorem}
\newtheorem{lemma}{Lemma}
\newtheorem{claim}{Claim}
\newtheorem{corol}{Corollary}
\newtheorem{fact}{Fact}
\newenvironment{new-proof}[1]
{{\em Proof  #1: }}%
{ \noindent\qed }
\newcommand{\qed}{\rule[0.1ex]{1.4ex}{1.6ex}}
\newcommand{\defeq}{\stackrel{\Delta}{=}}
\newcommand{\compls}{\mathbb{C}}
\newcommand{\mrm}{\mathrm}
\newcommand{\bA}{{\mathbf{A}}}
\newcommand{\bB}{{\mathbf{B}}}
\newcommand{\bF}{{\mathbf{F}}}
\newcommand{\bFt}{{\tilde{\bF}}}
\newcommand{\bG}{{\mathbf{G}}}
\newcommand{\bH}{{\mathbf{H}}}
\newcommand{\bHh}{{\hat{\mathbf{H}}}}
\newcommand{\cH}{{\mathcal H}}
\newcommand{\cHh}{\hat{{\mathcal H}}}
\newcommand{\bI}{{\mathbf{I}}}
\newcommand{\bJ}{{\mathbf{J}}}
\newcommand{\bK}{{\mathbf{K}}}
\newcommand{\cK}{{\mathcal{K}}}
\newcommand{\bKP}{{\mathbf{K}}_\mathrm{P}}
\newcommand{\hbKP}{\hat{{\mathbf{K}}}_\mathrm{P}}
\newcommand{\obKP}{{\bar{\mathbf{K}}}_\mathrm{P}}
\newcommand{\obKPh}{{\bar{\mathbf{K}}}_{\Phi}}
\newcommand{\oPh}{{\bar{\boldsymbol{\Phi}}}}
\newcommand{\cL}{{\mathcal{L}}}
\newcommand{\bM}{{\mathbf{M}}}
\newcommand{\CN}{{\mathcal{CN}}}
\newcommand{\Pt}{{\tilde{P}}}
\newcommand{\bS}{{\mathbf{S}}}
\newcommand{\bSP}{{\mathbf{S}}_\mrm{P}}
\newcommand{\bU}{{\mathbf{U}}}
\newcommand{\bv}{{\mathbf{v}}}
\newcommand{\bV}{{\mathbf{V}}}
\newcommand{\bW}{{\mathbf{W}}}
\newcommand{\bt}{\beta}
\newcommand{\g}{\gamma}
\newcommand{\bGa}{{\boldsymbol{\Gamma}}}
\newcommand{\De}{\Delta}
\newcommand{\bDe}{{\boldsymbol{\De}}}
\newcommand{\bPh}{{\boldsymbol{\Phi}}}
\newcommand{\bPhh}{{\boldsymbol{\hat{\Phi}}}}
\newcommand{\bPs}{{\boldsymbol{\Psi}}}
\newcommand{\bTh}{{\boldsymbol{\Theta}}}
\newcommand{\obTh}{\bar{\boldsymbol{\Theta}}}
\newcommand{\bThh}{{\hat{\bTh}}}
\newcommand{\La}{\Lambda}
\newcommand{\bLa}{{\boldsymbol{\La}}}
\newcommand{\bSi}{{\boldsymbol{\Sigma}}}
\newcommand{\Ups}{\Upsilon}
\newcommand{\bUp}{{\boldsymbol{\Ups}}}
\newcommand{\bOm}{{\boldsymbol{\Omega}}}
\newcommand{\cOm}{{\mathcal{ K}}_\Omega}
\newcommand{\Wh}{\hat{\Omega}}
\DeclareMathAlphabet{\mathbsf}{OT1}{cmss}{bx}{n}
\DeclareMathAlphabet{\mathssf}{OT1}{cmss}{m}{sl}
\DeclareSymbolFont{bsfletters}{OT1}{cmss}{bx}{n}  
\DeclareSymbolFont{ssfletters}{OT1}{cmss}{m}{n}
\DeclareMathSymbol{\bsfGamma}{0}{bsfletters}{'000}
\DeclareMathSymbol{\ssfGamma}{0}{ssfletters}{'000}
\DeclareMathSymbol{\bsfDelta}{0}{bsfletters}{'001}
\DeclareMathSymbol{\ssfDelta}{0}{ssfletters}{'001}
\DeclareMathSymbol{\bsfTheta}{0}{bsfletters}{'002}
\DeclareMathSymbol{\ssfTheta}{0}{ssfletters}{'002}
\DeclareMathSymbol{\bsfLambda}{0}{bsfletters}{'003}
\DeclareMathSymbol{\ssfLambda}{0}{ssfletters}{'003}
\DeclareMathSymbol{\bsfXi}{0}{bsfletters}{'004}
\DeclareMathSymbol{\ssfXi}{0}{ssfletters}{'004}
\DeclareMathSymbol{\bsfPi}{0}{bsfletters}{'005}
\DeclareMathSymbol{\ssfPi}{0}{ssfletters}{'005}
\DeclareMathSymbol{\bsfSigma}{0}{bsfletters}{'006}
\DeclareMathSymbol{\ssfSigma}{0}{ssfletters}{'006}
\DeclareMathSymbol{\bsfUpsilon}{0}{bsfletters}{'007}
\DeclareMathSymbol{\ssfUpsilon}{0}{ssfletters}{'007}
\DeclareMathSymbol{\bsfPhi}{0}{bsfletters}{'010}
\DeclareMathSymbol{\ssfPhi}{0}{ssfletters}{'010}
\DeclareMathSymbol{\bsfPsi}{0}{bsfletters}{'011}
\DeclareMathSymbol{\ssfPsi}{0}{ssfletters}{'011}
\DeclareMathSymbol{\bsfOmega}{0}{bsfletters}{'012}
\DeclareMathSymbol{\ssfOmega}{0}{ssfletters}{'012}
\renewcommand{\defeq}{\triangleq}
\newcommand{\rvbH}{{\mathbsf{H}}}
\newcommand{\rvu}{{\mathssf{u}}}	
\newcommand{\rvw}{{\mathssf{w}}}	
\newcommand{\rvx}{{\mathssf{x}}}	
\newcommand{\rvbx}{{\mathbsf{x}}}
\newcommand{\rvy}{{\mathssf{y}}}	
\newcommand{\rvby}{{\mathbsf{y}}}
\newcommand{\rvbyr}{{\mathbsf{y}}_\mathrm{r}}
\newcommand{\rvbye}{{\mathbsf{y}}_\mathrm{e}}
\newcommand{\rvbz}{{\mathbsf{z}}}
\newcommand{\rvbzr}{{\mathbsf{z}}_\mathrm{r}}
\newcommand{\rvbze}{{\mathbsf{z}}_\mathrm{e}}
\newcommand{\hrvbzr}{{\mathbsf{\hat{z}}}_\mathrm{r}}
\newcommand{\hrvbyr}{{\mathbsf{\hat{y}}}_\mathrm{r}}
\begin{document}

\maketitle

\pagestyle{empty}

\begin{abstract}
The MIMOME channel is a Gaussian wiretap channel in which the sender,
receiver, and eavesdropper all have multiple antennas.  We
characterize the secrecy capacity as the saddle-value of a minimax
problem.  Among other implications, our result establishes that a
Gaussian distribution maximizes the secrecy capacity characterization
of Csisz{\'a}r and K{\"o}rner when applied to the MIMOME channel. We
also determine a necessary and sufficient condition for the secrecy
capacity to be zero. Large antenna array analysis of this condition
reveals several useful insights into the conditions under which secure
communication is possible.
\end{abstract}

\section{Introduction}
Multiple antennas are a valuable resource in wireless communications. Recently there has been a significant activity in exploring both the theoretical and practical aspects of wireless systems with multiple antennas. In this work we explore the role of multiple antennas for physical layer security, which is an emerging  
area of interest. 

The wiretap channel\cite{wyner:75Wiretap} is an information theoretic model for physical layer security. The setup has three terminals --- one sender, one receiver and one eavesdropper. The goal is to exploit the structure of the underlying broadcast channel to transmit a message reliably to the intended receiver, while leaking asymptotically no information to the eavesdropper. A single letter characterization of the secrecy capacity, when the underlying  channel is a discrete memoryless broadcast channel, has been obtained by Csisz{\'a}r and K{\"o}rner~\cite{csiszarKorner:78}. An explicit solution for the scalar Gaussian case is obtained in~\cite{leung-Yan-CheongHellman:99}. 

In this paper we consider the case where all the three terminals have
multiple antennas and naturally refer to it as multiple input,
multiple output, multiple eavesdropper (MIMOME) channel. In this setup
we assume that the channel matrices are fixed and known to all the
three terminals. While the assumption that the eavesdropper's channel
is known to both the sender and the receiver is obviously a strong
assumption, we remark in advance that our solution provides ultimate
limits on secure transmission with multiple antennas and could be a
starting point for other formulations where the eavesdropper's channel
may not be known to the sender and the receiver.

The main result of this paper is a characterization of the secrecy capacity of the MIMOME channel as the saddle value of a minimax problem. Our approach does not rely on the Csisz{\'a}r and K{\"o}rner capacity expression, but instead is based on the technique used in characterizing the sum rate of the MIMO broadcast channel (see, e.g.,~\cite{yuwei:06} and its references). We first develop a minimax expression that upper bounds  the secrecy capacity and subsequently establish the tightness of this bound for the MIMOME channel.

The case where the channel matrices of intended receiver and
eavesdropper are square and diagonal follows from the results
in~\cite{liangPoor07,liYatesTrappe:06a,khistiTchamWornell:07,gopalaLaiElGamal:06Secrecy}
that consider secure transmission over fading channels. The difficulty
of optimizing the Csisz{\'a}r and K{\"o}rner expression for the
general case has been reported
in~\cite{NegiGoel05,LiTrappeYates07,shaifeeUlukus:07} and achievable
rates have been investigated. The approach used in the present paper
has been used in our earlier
work~\cite{khistiWornellEldar:07,khistiWornell:07} to establish the
secrecy capacity for two special cases: the case when the intended
receiver has a single antenna (MISOME case) and the MIMOME secrecy
capacity in the high SNR regime. This upper bounding approach was
independently conceived by Ulukus et. al.~\cite{UlukusCorresp} and
further applied to the 2x2x1 case~\cite{shafieeLiuUlukus:07}.
Finally, a related approach for the MIMOME channel, is developed
independently in~\cite{FrederiqueHassibi:07}. Also it is interesting
to note that this upper bounding approach has been empirically
observed to be tight for the problem of broadcasting two private
messages to two receivers when each receiver has a single
antenna~\cite{liuPoor:07}.  For this setup a single letter
characterization is not known for the discrete memoryless
case~\cite{LiuMaricSpasojevicYates:07,cai:06}

\section{Channel Model}
We denote the number of antennas at the sender, the receiver and the eavesdropper by $\Nt$, $\Nr$ and $\Ne$ respectively. \begin{equation}
\begin{aligned}
\rvby_\mrm{r}(t) &= \bHr\rvbx(t) + \rvbz_\mrm{r}(t)\\
\rvby_\mrm{e}(t) &= \bHe\rvbx(t) + \rvbz_\mrm{e}(t),
\end{aligned}
\end{equation}
where $\bHr \in \compls^{\Nr\times \Nt}$ and $\bHe \in \compls^{\Ne\times \Nt}$ are channel matrices associated with the receiver and the eavesdropper. The channel matrices are fixed for the entire transmission period and known to all the three terminals. The additive noise $\rvbzr(t)$ and $\rvbze(t)$  are circularly-symmetric and complex-valued Gaussian random variables. The input satisfies a power constraint $E\left[\frac{1}{n}\sum_{t=1}^n ||\rvbx(t)||^2\right]\le P.$

A rate $R$ is achievable if there exists a sequence of length $n$
codes, such that the error probability at the intended receiver and
$\frac{1}{n}I(\rvw;\rvby_\mrm{e}^n)$ both approach zero as
$n\rightarrow \infty$.  The secrecy capacity is the supremum of all
achievable rates.

\section{MIMOME Secrecy Capacity}

Our main result is the following characterization of the secrecy
capacity of the MIMOME wiretap channel.
\begin{thm}
The secrecy capacity of the MIMOME wiretap channel is
\begin{equation}
C = \min_{\bK_\bPh \in \cK_\bPh}\max_{\bKP\in \cK_\mrm{P}}R_+(\bKP,\bK_\bPh),
\label{eq:capacity}
\end{equation}
where $R_+(\bKP,\bK_\bPh)= I(\rvbx;\rvby_\mrm{r} \mid \rvby_\mrm{e})$ with $\rvbx \sim \CN(\mathbf{0},\bKP)$ and 
\begin{equation}
\begin{aligned}
\cK_\mrm{P} \defeq \left\{\bKP \Biggm| \bKP \succeq {\bf 0},\quad
\tr(\bKP) \le P \right\},
\end{aligned}\label{eq:KP-def}
\end{equation} and where $[\rvbzr^\dagger, \rvbze^\dagger]^\dagger \sim \CN(\mathbf{0},\bK_\bPh)$, with \begin{equation} 
\begin {aligned}
\cK_\bPh &\defeq \left\{\bK_\bPh \Biggm| \bK_\bPh =
\begin{bmatrix} \bI_{n_\mrm{r}} & \bPh \\ \bPh^\dagger & \bI_{n_\mrm{e}} \end{bmatrix},\quad \bK_\bPh \succeq {\bf 0}
\right\} \\
&= \left\{\bK_\bPh \Biggm| \bK_\bPh =
\begin{bmatrix} \bI_{n_\mrm{r}} & \bPh \\ \bPh^\dagger & \bI_{n_\mrm{e}} \end{bmatrix},\quad
\sigma_\mrm{max}(\bPh)\le1\right\}. 
\end {aligned}
\label{eq:Kph-def}
\end{equation}
Furthermore,\footnote{In the remainder of this paper, $\bI$  denotes
  an identity matrix and $\mathbf{0}$  denotes the matrix with all zeros. The
  dimensions of these matrices  will be suppressed and will be clear
  from the context. Also we use the superscript $^\dagger$ to denote
  the  hermitian conjugate of a matrix.   } the minimax problem
in~\eqref{eq:capacity} has a saddle point solution $(\obKP,\obKPh)$
and the secrecy capacity can also be expressed as, 
\begin{equation}
C = R_+(\obKP,\obKPh) = \log\frac{\det(\bI + \bHr\obKP\bHr^\dagger)}{\det(\bI + \bHe\obKP\bHe^\dagger)}.\label{eq:alternateCap}
\end{equation}
\label{thm1}
\end{thm}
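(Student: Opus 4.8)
The plan is to establish \eqref{eq:capacity} via a minimax converse and a matching Gaussian achievability, and to extract \eqref{eq:alternateCap} from a saddle-point analysis of the resulting $\min\max$ problem.

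\textbf{Converse.} The first goal is $C\le\min_{\bK_\bPh\in\cK_\bPh}\max_{\bKP\in\cK_\mrm{P}}R_+(\bKP,\bK_\bPh)$. The crucial point is that the secrecy capacity of a wiretap channel is a functional of the two marginal transition laws $\rvbx\mapsto\rvby_\mrm{r}$ and $\rvbx\mapsto\rvby_\mrm{e}$ alone and is insensitive to the joint law of $(\rvbzr,\rvbze)$; thus for \emph{every} $\bK_\bPh\in\cK_\bPh$ we may replace the noises by jointly Gaussian ones of covariance $\bK_\bPh$ without changing $C$. For that correlated channel, Fano's inequality together with the secrecy constraint $\tfrac1n I(\rvw;\rvby_\mrm{e}^n)\to0$ and a genie that hands $\rvby_\mrm{e}^n$ to the legitimate receiver gives $nR\le I(\rvbx^n;\rvby_\mrm{r}^n\mid\rvby_\mrm{e}^n)+n\epsilon_n$; since the noise process is i.i.d.\ in time, single-letterization via ``conditioning reduces entropy'' yields $R\le\sup_{p(\svbx):\,E\|\svbx\|^2\le P}I(\rvbx;\rvby_\mrm{r}\mid\rvby_\mrm{e})$, and a maximum-conditional-entropy argument forces the optimizing $\rvbx$ to be Gaussian, so $R\le\max_{\bKP\in\cK_\mrm{P}}R_+(\bKP,\bK_\bPh)$. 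Minimizing over the free parameter $\bK_\bPh$ gives the bound.

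\textbf{Existence of a saddle point.} Next I would show $R_+$ has a saddle point on $\cK_\mrm{P}\times\cK_\bPh$. Writing $R_+(\bKP,\bK_\bPh)=\log\det(\bK_\bPh+\bG\bKP\bG^\dagger)-\log\det\bK_\bPh-\log\det(\bI+\bHe\bKP\bHe^\dagger)$ with $\bG=[\bHr^\dagger\ \bHe^\dagger]^\dagger$, one checks that $R_+(\,\cdot\,,\bK_\bPh)$ is concave in $\bKP$ (the conditional covariance $\mrm{Cov}(\rvby_\mrm{r}\mid\rvby_\mrm{e})$ is an affine matrix minus a matrix-fractional function of $\bKP$, hence matrix-concave, and $\log\det$ is concave and increasing) and $R_+(\bKP,\,\cdot\,)$ is convex in $\bK_\bPh$ (the map $\bK_\bPh\mapsto\log\det(\bK_\bPh+M)-\log\det\bK_\bPh$ is convex for $M\succeq\mathbf{0}$). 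Since $\cK_\mrm{P}$ and $\cK_\bPh$ are convex and compact, Sion's minimax theorem supplies a saddle point $(\obKP,\obKPh)$ with $\min_{\bK_\bPh}\max_{\bKP}R_+=\max_{\bKP}\min_{\bK_\bPh}R_+=R_+(\obKP,\obKPh)$.

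\textbf{Closed form and achievability.} I would then argue that at the saddle point the eavesdropper becomes stochastically degraded: with $\rvbx\sim\CN(\mathbf{0},\obKP)$ and noise covariance $\obKPh$, the chain $\rvbx-\rvby_\mrm{r}-\rvby_\mrm{e}$ is Markov, i.e.\ $I(\rvbx;\rvby_\mrm{e}\mid\rvby_\mrm{r})=0$. Granting this, $R_+(\obKP,\obKPh)=I(\rvbx;\rvby_\mrm{r})-I(\rvbx;\rvby_\mrm{e})=\log\frac{\det(\bI+\bHr\obKP\bHr^\dagger)}{\det(\bI+\bHe\obKP\bHe^\dagger)}$, which is \eqref{eq:alternateCap}. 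For the matching lower bound, invoke the Csisz{\'a}r--K{\"o}rner achievability result with Gaussian input $\rvbx\sim\CN(\mathbf{0},\obKP)$ and no channel prefixing: it achieves $I(\rvbx;\rvby_\mrm{r})-I(\rvbx;\rvby_\mrm{e})$, a quantity determined by the marginals alone, which by the above equals $R_+(\obKP,\obKPh)=\min_{\bK_\bPh}\max_{\bKP}R_+$. Combining this with the converse closes the argument.

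\textbf{Main obstacle.} The heart of the proof is the degradedness claim: that the \emph{worst-case} noise correlation $\obKPh$ and the \emph{best} input $\obKP$ are compatible, in the sense that together they induce a degraded eavesdropper --- which is exactly what makes the generally loose Csisz{\'a}r--K{\"o}rner bound tight and shows that channel prefixing is unnecessary. I expect this to come out of the stationarity (KKT) conditions for the inner minimization over $\bK_\bPh\in\cK_\bPh$, which should drive $\oPh$ to correlate $\rvbze$ with $\rvbzr$ as strongly as the constraint $\sigma_\mrm{max}(\oPh)\le1$ allows, making that constraint active precisely along the signalling directions of $\obKP$. Carrying this through --- including the degenerate cases where $\obKP$ or $\bHe$ is rank-deficient, and verifying that the constructed $\obKPh$ satisfies both saddle inequalities --- is where the real work lies; the converse step is standard bookkeeping and the saddle-point step reduces to convexity checks.
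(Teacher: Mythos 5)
Your outline follows the same high-level architecture as the paper's proof: minimax upper bound via correlated Gaussian noises, existence of a saddle point from convex-concave structure on compact convex sets, and closing the gap by showing the saddle point induces a degraded eavesdropper so that $R_+(\obKP,\obKPh)=I(\rvbx;\rvby_\mrm{r})-I(\rvbx;\rvby_\mrm{e})$ is achievable with Gaussian $\rvbx$. The converse and saddle-point steps match the paper essentially line for line (the paper proves concavity in $\bKP$ via joint concavity of Schur complements, a slightly different route than your matrix-fractional argument, but equivalent). You also correctly identify degradedness at the saddle point as the heart of the matter.

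The gap is in how you propose to establish degradedness. You write that you ``expect this to come out of the stationarity (KKT) conditions for the inner minimization over $\bK_\bPh$.'' That alone is not enough. The paper's Lemma~\ref{lem:NoiseCondn} (the inner KKT) delivers only
\begin{equation*}
(\bHr-\obTh\bHe)\,\obKP\,(\oPh^\dagger\bHr-\bHe)^\dagger=\mathbf{0},
\end{equation*}
i.e.\ $\bM\,\bS^\dagger(\oPh^\dagger\bHr-\bHe)^\dagger=\mathbf{0}$ with $\bM=(\bHr-\obTh\bHe)\bS$ and $\obKP=\bS\bS^\dagger$. This equation does not by itself force $\oPh^\dagger\bHr\bS=\bHe\bS$; one must also know that $\bM$ has full column rank, and that fact comes from the \emph{outer} maximization over $\bKP$. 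The paper's Lemma~\ref{lem:OptCovCondn} proves it through a genuinely nontrivial reformulation: it shows $\obKP$ also maximizes $h(\rvbyr-\obTh\rvbye)$ with $\obTh$ held \emph{fixed} (Claim~\ref{claim:HSaddlePoint}, which requires matching KKT multipliers between the two problems), and then runs a water-filling argument on the effective MIMO channel $\bJ^{-1/2}(\bHr-\obTh\bHe)$ to conclude that the column space of $\bS$ lies in the orthogonal complement of its null space (Claim~\ref{claim:FullRankCase}). Your intuition that the constraint $\sigma_\mrm{max}(\oPh)\le1$ becomes ``active along the signalling directions'' is also not quite the mechanism: when $\obKPh\succ\mathbf{0}$ the constraint is not active at all, and the degradedness relation holds because of the interplay between the two KKT systems, not because a boundary is hit. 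Finally, the degenerate cases you flag (singular $\obKPh$, and $\bHr=\obTh\bHe$ giving $C=0$) each require separate reductions in the paper, so ``carrying this through'' is indeed where most of the proof lives --- considerably more than your outline suggests.
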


\subsection{Connection with Csisz{\'a}r and K{\"o}rner Capacity}
A characterization of the secrecy capacity for the non-degraded discrete memoryless broadcast channel $p_{\rvy_\mrm{r},\rvy_\mrm{e}|\rvx}$ is provided by Csisz{\'a}r and K{\"o}rner~\cite{csiszarKorner:78},
\begin{align} 
C = \max_{p_{\rvu},p_{\rvx|\rvu}}
I(\rvu;\rvy_\mrm{r})-I(\rvu;\rvy_\mrm{e}),
\label{eq:CK}
\end{align}
where $\rvu$ is an auxiliary random variable (over a certain alphabet with bounded cardinality)
that satisfies $\rvu \rightarrow \rvx \rightarrow
(\rvy_\mrm{r},\rvy_\mrm{e})$.   As remarked
in~\cite{csiszarKorner:78}, the secrecy capacity \eqref{eq:CK} can be
extended in principle to incorporate continuous-valued inputs.
However, directly identifying the optimal $\rvu$ for the MIMOME case
is not straightforward.   

Theorem~\ref{thm1} indirectly establishes an optimal choice of $\rvu$ in~\eqref{eq:CK}.  Suppose that $(\obKP,\obKPh)$ is a saddle point solution to the minimax problem in~\eqref{eq:capacity}. From~\eqref{eq:alternateCap} we have
\begin{equation}
R_+(\obKP,\obKPh)=R_-(\obKP),\label{eq:SaddlePointProperty}
\end{equation}
where $$R_-(\obKP)\defeq \log\frac{\det(\bI +\bHr\obKP\bHr^\dagger)}{\det(\bI
  + \bHe\obKP\bHe^\dagger)}$$
is the achievable rate  obtained by
evaluating \eqref{eq:CK} for $\rvu=\rvx \sim \CN(\mathbf{0},\obKP)$. This
choice of $p_\rvu, p_{\rvbx|\rvu}$ thus
maximizes~\eqref{eq:CK}. Furthermore note that 
\begin{equation}
\label{eq:achievRate}
\begin{aligned}
\obKP &\in  \argmax_{\bKP \in \cK_\mrm{P}}\log\frac{\det(\bI + \bHr \bKP \bHr^\dagger)}{\det(\bI + \bHe \bKP \bHe^\dagger)}
\end{aligned}
\end{equation}
 where the set $\cK_\mrm{P}$ is defined in~\eqref{eq:KP-def}. Unlike
 the minimax problem~\eqref{eq:capacity} the maximization
 problem~\eqref{eq:achievRate} is not a convex optimization problem
 since the objective function is not a concave function of
 $\bKP$. Even if one verifies that $\obKP$ satisfies the optimality
 conditions associated with~\eqref{eq:achievRate}, this will only
 establish that $\obKP$ is a locally optimal solution. The capacity
 expression ~\eqref{eq:capacity} provides a convex reformulation
 of~\eqref{eq:achievRate} and establishes that $\obKP$ is a globally
 optimal solution in~\eqref{eq:achievRate}.\footnote{The ``high SNR"
 case of this problem i.e., $\max_{\bK \in \cK_\mrm{\infty}}\log
 \frac{\det(\bHr\bK \bHr^\dagger)}{\det(\bHe\bK\bHe^\dagger)}$ is known as the
 multiple-discriminant-function in multivariate statistics and is
 well-studied; see, e.g.,~\cite{Wilks:62}.}

\subsection{Structure of the optimal solution}
As we establish in Section~\ref{subsec:SaddleValue}, if
$(\obKP,\obKPh)$ is a saddle point solution to the minimax problem, if
$\bS$ is any matrix that has a  full column rank matrix and satisfies
$\obKP=\bS\bS^\dagger$ and if $\oPh$ is the cross-covariance matrix
between the noise random variables (c.f.~\eqref{eq:Kph-def}), then 
\begin{equation}
\bHe\bS = \oPh^\dagger \bHr \bS.\label{eq:degradedness}
\end{equation}
The condition in~\eqref{eq:degradedness} admits an intuitive interpretation. From~\eqref{eq:Kph-def} $\oPh$ is a contraction matrix i.e., all its singular values are less than or equal to unity. The column space of  $\bS$ is the subspace in which the sender transmits information. So~\eqref{eq:degradedness} states that no information is transmitted along any direction where the eavesdropper observes a stronger signal than the intended receiver.  The effective channel of the eavesdropper, $\bHe\bS$, is a degraded version of the effective channel of the intended receiver, $\bHr\bS$.

\section{Proof of Theorem~\ref{thm1}}
Our proof involves two main parts. First we show that the right hand side in~\eqref{eq:capacity} is an upper bound on the secrecy capacity. Then we examine the optimality conditions associated with the saddle point solution to establish~\eqref{eq:SaddlePointProperty}, which completes the proof since 
$$C \le R_+(\obKP, \obKPh) = R_-(\obKP)\le C.$$

That the right hand side in~\eqref{eq:capacity} is an upper bound on the secrecy capacity has already been established:
\begin{lemma}[\cite{khistiWornell:07,khistiWornellEldar:07}] An upper bound on the secrecy capacity for the MIMOME channel is 
\begin{equation}
C \le \min_{\bK_\bPh \in \cK_\bPh}\max_{\bKP\in \cK_\mrm{P}}R_+(\bKP,\bK_\bPh),
\end{equation}
where $\cK_\mrm{P}$ and $\cK_\bPh$ are defined in~\eqref{eq:KP-def} and~\eqref{eq:Kph-def} respectively.
\end{lemma}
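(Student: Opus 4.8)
The plan is to run a Sato-type argument, mirroring the converse for the sum rate of the Gaussian MIMO broadcast channel. First I would whiten, absorbing the (known) noise covariances into $\bHr$ and $\bHe$ so that per symbol $\rvbz_\mrm{r}(t)\sim\CN(\mathbf{0},\bI_{\Nr})$ and $\rvbz_\mrm{e}(t)\sim\CN(\mathbf{0},\bI_{\Ne})$. The key observation is that the reliability of any wiretap code is a property of the marginal $p_{\rvby_\mrm{r}^n\mid\rvbx^n}$ only, and its leakage $\tfrac1n I(\rvw;\rvby_\mrm{e}^n)$ a property of the marginal $p_{\rvby_\mrm{e}^n\mid\rvbx^n}$ only; neither depends on the joint law of $(\rvby_\mrm{r}^n,\rvby_\mrm{e}^n)$ given $\rvbx^n$. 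Hence any achievable rate remains achievable if, for each symbol, we replace the independent noises by a jointly Gaussian pair $[\rvbz_\mrm{r}(t)^\dagger,\rvbz_\mrm{e}(t)^\dagger]^\dagger\sim\CN(\mathbf{0},\bK_\bPh)$ for an arbitrary $\bK_\bPh\in\cK_\bPh$ (i.e.\ $\sigma_\mrm{max}(\bPh)\le1$), and it suffices to upper bound the secrecy capacity $C(\bPh)$ of each such correlated-noise channel and then take the infimum over $\bK_\bPh$.

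For a fixed $\bPh$ I would run the standard converse. Fano at the intended receiver and the secrecy constraint give, for some $\delta_n\to0$,
\[
\begin{aligned}
n(R-2\delta_n)&\le I(\rvw;\rvby_\mrm{r}^n)-I(\rvw;\rvby_\mrm{e}^n)\le I(\rvw;\rvby_\mrm{r}^n\mid\rvby_\mrm{e}^n)\\
&\le I(\rvbx^n;\rvby_\mrm{r}^n\mid\rvby_\mrm{e}^n),
\end{aligned}
\]
the last two steps using $I(\rvw;\rvby_\mrm{e}^n\mid\rvby_\mrm{r}^n)\ge0$ and the conditional data-processing inequality for the Markov chains $\rvw\to\rvbx^n\to(\rvby_\mrm{r}^n,\rvby_\mrm{e}^n)$ and $\rvw\to(\rvbx^n,\rvby_\mrm{e}^n)\to\rvby_\mrm{r}^n$. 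I would then single-letterize: conditioned on $\rvbx^n$ the channel is memoryless, so $h(\rvby_\mrm{r}^n\mid\rvby_\mrm{e}^n,\rvbx^n)=\sum_t h(\rvbz_\mrm{r}(t)\mid\rvbz_\mrm{e}(t))$ is a constant, while $h(\rvby_\mrm{r}^n\mid\rvby_\mrm{e}^n)\le\sum_t h(\rvby_\mrm{r}(t)\mid\rvby_\mrm{e}(t))$ by the chain rule and conditioning; introducing a uniform time-sharing index $\rvq$ this reads $\tfrac1n I(\rvbx^n;\rvby_\mrm{r}^n\mid\rvby_\mrm{e}^n)\le I(\rvbx(\rvq);\rvby_\mrm{r}(\rvq)\mid\rvby_\mrm{e}(\rvq),\rvq)$.

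The step I expect to be the subtle one is passing from here to a single Gaussian input, because $R_+(\bKP,\bK_\bPh)=I(\rvbx;\rvby_\mrm{r}\mid\rvby_\mrm{e})$ is \emph{not} concave in $\bKP$ --- exactly the non-convexity flagged around~\eqref{eq:achievRate} --- so one cannot average per-symbol covariances inside a sum of $R_+$'s. The fix is to absorb $\rvq$ into the input: since the noise process is independent of $(\rvbx(\rvq),\rvq)$ one has $h(\rvby_\mrm{r}(\rvq)\mid\rvby_\mrm{e}(\rvq),\rvbx(\rvq),\rvq)=h(\rvby_\mrm{r}(\rvq)\mid\rvby_\mrm{e}(\rvq),\rvbx(\rvq))$, hence $I(\rvbx(\rvq);\rvby_\mrm{r}(\rvq)\mid\rvby_\mrm{e}(\rvq),\rvq)\le I(\rvbx(\rvq);\rvby_\mrm{r}(\rvq)\mid\rvby_\mrm{e}(\rvq))$ with $\rvbx(\rvq)$ now a \emph{single} random vector obeying $\tr(\cov(\rvbx(\rvq)))\le P$. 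For any such input, writing $I(\rvbx;\rvby_\mrm{r}\mid\rvby_\mrm{e})=h(\rvby_\mrm{r}\mid\rvby_\mrm{e})-h(\rvbz_\mrm{r}\mid\rvbz_\mrm{e})$, the second term is input-independent and the first is at most the conditional differential entropy of a proper complex Gaussian pair with the same second moments as $(\rvby_\mrm{r},\rvby_\mrm{e})$ --- which depend only on $\cov(\rvbx)$ and $\bPh$ --- i.e.\ at most $R_+(\cov(\rvbx),\bK_\bPh)$, with equality for Gaussian $\rvbx$. Thus $\tfrac1n I(\rvbx^n;\rvby_\mrm{r}^n\mid\rvby_\mrm{e}^n)\le\max_{\bKP\in\cK_\mrm{P}}R_+(\bKP,\bK_\bPh)$, so $R\le 2\delta_n+\max_{\bKP\in\cK_\mrm{P}}R_+(\bKP,\bK_\bPh)$; letting $n\to\infty$ and minimizing over $\bK_\bPh\in\cK_\bPh$ (compact, with $\bK_\bPh\mapsto\max_{\bKP}R_+$ continuous, so the minimum is attained) yields the claim. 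Equivalently, one could replace the Fano-based single-letterization by the continuous-alphabet extension of~\eqref{eq:CK} together with the elementary bound $I(\rvu;\rvby_\mrm{r})-I(\rvu;\rvby_\mrm{e})\le I(\rvu;\rvby_\mrm{r}\mid\rvby_\mrm{e})\le I(\rvbx;\rvby_\mrm{r}\mid\rvby_\mrm{e})$ valid under $\rvu\to\rvbx\to(\rvby_\mrm{r},\rvby_\mrm{e})$, reaching $C(\bPh)\le\max_{p_{\rvbx}}I(\rvbx;\rvby_\mrm{r}\mid\rvby_\mrm{e})$ before the same Gaussian-extremal step.
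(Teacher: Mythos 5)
Your argument is the standard Sato-type converse and matches the approach the paper attributes to the MIMO broadcast-channel literature, so the route is essentially the same; the chain of inequalities, the single-letterization, and the Gaussian-extremal step are all sound. One factual misstatement should be corrected: $R_+(\bKP,\bK_\bPh) = I(\rvbx;\rvby_\mrm{r}\mid\rvby_\mrm{e})$ \emph{is} concave in $\bKP$ for fixed $\bK_\bPh$ --- this is precisely Claim~\ref{claim:concavity} of the paper, proved via the Schur-complement identity \eqref{eq:RSchurDef}--\eqref{eq:lamDef}. The non-concavity the paper flags around \eqref{eq:achievRate} concerns the unconditioned difference $R_-(\bKP)=\log\det(\bI+\bHr\bKP\bHr^\dagger)-\log\det(\bI+\bHe\bKP\bHe^\dagger)$, not the conditional quantity $R_+$. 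Consequently the single-letterization step you call subtle is in fact routine: one can bound $\frac1n\sum_t I(\rvbx(t);\rvby_\mrm{r}(t)\mid\rvby_\mrm{e}(t))\le\frac1n\sum_t R_+\bigl(\cov(\rvbx(t)),\bK_\bPh\bigr)\le R_+\bigl(\frac1n\sum_t\cov(\rvbx(t)),\bK_\bPh\bigr)$ directly by per-symbol Gaussian extremality followed by concavity. Your workaround --- absorbing $\rvq$ into a single input and applying the conditional Gaussian-maximum-entropy bound to that one vector --- is also correct and arguably more self-contained, since it does not presuppose any convexity structure; but the claim that such a detour is \emph{needed} rests on a false premise.
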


Hence it suffices to establish~\eqref{eq:SaddlePointProperty}, which we do in the remainder of this section. We divide the proof into several steps, which are outlined in Fig.~\ref{fig:proof}.
\begin{figure}
\small
\psfrag{Sdl}{Saddle Point: $(\obKP,\obKPh)$}
\psfrag{Kcond}{$\displaystyle \obKPh \in \argmin_{\!\cK_\bPh}\!R_+(\!\obKP,\!\bK_\bPh\!)$}
\psfrag{Qcond}{$\displaystyle\obKP \in \argmax_{ \cK_P} R_+(\bKP,\obKPh)$}
\psfrag{Qcond2}{$\displaystyle\obKP \in \argmax_{ \cK_P} h(\rvbyr-\obTh\rvbye)$}
\psfrag{FinalCondn}{$\oPh^\dagger \bH_\mrm{r} \bS =  \bH_\mrm{e} \bS \Rightarrow R_+(\obKP,\obKPh)=R_-(\obKP)$}
\includegraphics[scale=0.34]{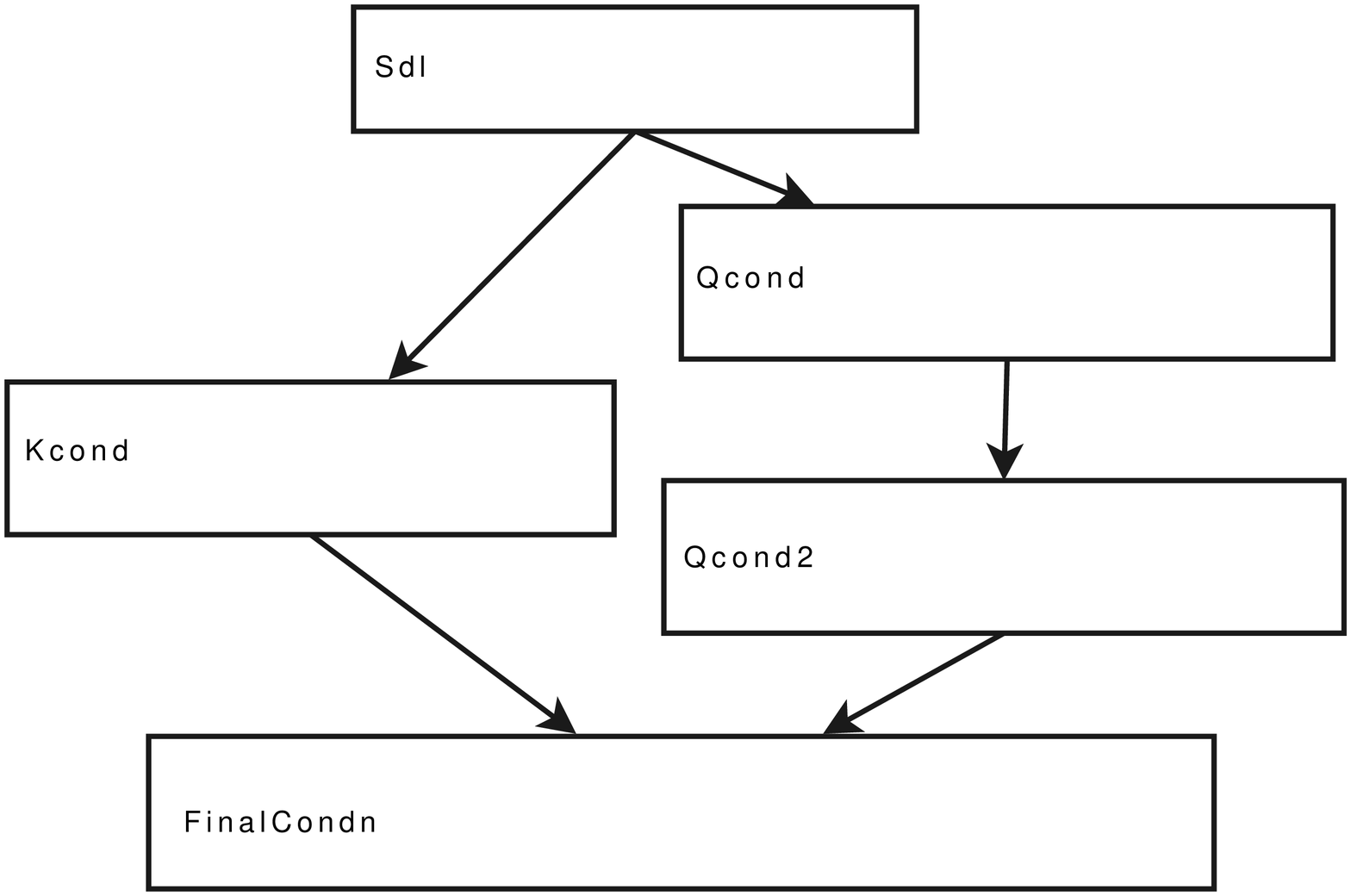}
\caption{Key steps in the Proof of Theorem 1. In section~\ref{subsec:saddlePoint} we establish that the minimax problem has a saddle point $(\obKP,\obKPh)$ . Section~\ref{subsec:optNoise} obtains a  condition satisfied by $(\obKP,\obKPh)$ via  the KKT conditions associated with the noise covariance, while section~\ref{subsec:optCov} obtains another condition that $(\obKP,\obKPh)$ satisfy, by first showing that $\obKP$ is also an optimal covariance of another MIMO channel. Combining these two conditions we show in Section~\ref{subsec:SaddleValue}  that the upper and lower bounds coincide.   }
\label{fig:proof}
\end{figure}

\subsection{Existence of the Saddle Point}
\label{subsec:saddlePoint}
Our first step is to show that for the minimax problem in~\eqref{eq:capacity}, a saddle point solution exists, i.e., there exists a point $(\obKP,\obKPh)$ with $\obKP \in \cK_\mrm{P}$  and $\obKPh \in \cK_\bPh$, such that for any $\bKP\in \cK_\mrm{P}$ and $\bK_\bPh\in \cK_\bPh$, we have that
\begin{equation}R_+(\bKP,\obKPh) \le R_+(\obKP,\obKPh)\le R_+(\obKP,\bK_\bPh).\label{eq:RKbKphSaddle}\end{equation}

Towards this end, we show the following convexity properties of the objective function.
\begin{claim}
\label{claim:concavity}
For any fixed \mbox{$\bKP \in \cK_\mrm{P}$}, the function $R_+(\bKP,\bK_\bPh)$ is convex in $\bK_\bPh$. For any fixed $\bK_\bPh \in \cK_\bPh$, the function $R_+(\bKP,\bK_\bPh)$ is concave in $\bKP$. 
\end{claim}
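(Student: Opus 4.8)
The claim has two halves. The first half asserts that $R_+(\bKP,\bK_\bPh) = I(\rvbx;\rvby_\mrm{r}\mid\rvby_\mrm{e})$, with $\rvbx\sim\CN(\mathbf 0,\bKP)$ and joint noise covariance $\bK_\bPh$, is convex in $\bK_\bPh$ for fixed $\bKP$; the second that it is concave in $\bKP$ for fixed $\bK_\bPh$. I would begin by writing $R_+$ explicitly as a difference of log-determinants. Letting $\bH \defeq [\bHr^\dagger,\bHe^\dagger]^\dagger$ be the stacked channel, the joint output covariance is $\bH\bKP\bH^\dagger + \bK_\bPh$, the eavesdropper output covariance is $\bHe\bKP\bHe^\dagger + \bI_{n_\mrm{e}}$, the noise cross-block gives $\bK_\bPh$ itself and its eavesdropper marginal is $\bI_{n_\mrm{e}}$. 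Using $I(\rvbx;\rvby_\mrm{r}\mid\rvby_\mrm{e}) = h(\rvby_\mrm{r},\rvby_\mrm{e}) - h(\rvby_\mrm{e}) - h(\rvbz_\mrm{r},\rvbz_\mrm{e}) + h(\rvbz_\mrm{e})$ for jointly Gaussian vectors, this becomes
\begin{equation}
R_+(\bKP,\bK_\bPh) = \log\det\!\bigl(\bH\bKP\bH^\dagger + \bK_\bPh\bigr) - \log\det\!\bigl(\bHe\bKP\bHe^\dagger + \bI\bigr) - \log\det\bK_\bPh,
\label{eq:Rplus-explicit}
\end{equation}
where the last two differentials coming from the noise entropies are constants or depend only on one argument. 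I would state this identity carefully, noting that the $h(\rvby_\mrm{e})$ and $h(\rvbz_\mrm{e})$ terms depend on $\bKP$ only (not $\bK_\bPh$), while $h(\rvbz_\mrm{r},\rvbz_\mrm{e}) = \log\det\bK_\bPh$ depends on $\bK_\bPh$ only.

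For concavity in $\bKP$ (second half), fix $\bK_\bPh$ and use the representation $I(\rvbx;\rvby_\mrm{r}\mid\rvby_\mrm{e}) = h(\rvby_\mrm{r}\mid\rvby_\mrm{e}) - h(\rvby_\mrm{r}\mid\rvby_\mrm{e},\rvbx) = h(\rvby_\mrm{r}\mid\rvby_\mrm{e}) - h(\rvbz_\mrm{r}\mid\rvbz_\mrm{e})$. The second term is a constant (it depends only on $\bK_\bPh$). The first term, $h(\rvby_\mrm{r}\mid\rvby_\mrm{e})$, is the conditional entropy of a Gaussian; I would invoke the standard fact that conditional entropy of a jointly Gaussian vector obtained by passing a Gaussian input through a fixed linear channel plus fixed Gaussian noise is a concave function of the input covariance — this is exactly the argument used for the MIMO broadcast sum-rate (the reference to \cite{yuwei:06} in the text). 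Concretely, $h(\rvby_\mrm{r}\mid\rvby_\mrm{e})$ equals (up to constants) $\log\det$ of the Schur complement of the eavesdropper block in $\bH\bKP\bH^\dagger+\bK_\bPh$, which can be rewritten using the matrix inversion lemma as $\log\det\bigl(\bKP^{-1} + \bHe^\dagger\bHe\bigr)^{-1}$-type expressions plus a $\log\det\bKP$ term when $\bKP\succ0$; concavity then follows from concavity of $\log\det$ and its behavior under the inverse, with a limiting argument to cover rank-deficient $\bKP$. Alternatively, and more cleanly, concavity in $\bKP$ follows from the mutual-information data-processing / entropy-power style argument: $I(\rvbx;\rvby_\mrm{r}\mid\rvby_\mrm{e})$ for Gaussian $\rvbx$ is the secrecy rate of a degraded-ordering-free channel and one can write it as a minimum over auxiliary Gaussian noises, each of which is affine-composed-with-concave in $\bKP$; I would pick whichever of these is shortest to write given the paper's conventions.

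For convexity in $\bK_\bPh$ (first half), fix $\bKP$ and work from \eqref{eq:Rplus-explicit}. Dropping the $\bKP$-only terms, $R_+$ as a function of $\bK_\bPh$ is $\log\det(\bA + \bK_\bPh) - \log\det\bK_\bPh$ where $\bA = \bH\bKP\bH^\dagger \succeq 0$ is fixed. I would show this map is convex on the positive-definite cone. The function $\bK\mapsto\log\det(\bA+\bK)-\log\det\bK = -\log\det\bigl(\bI - \bA(\bA+\bK)^{-1}\bigr)$; equivalently, writing $g(\bK) = \log\det(\bA+\bK) - \log\det\bK$, one checks convexity either by the second-order condition (computing the Hessian quadratic form and showing it is nonnegative — this reduces to a statement comparable to the operator convexity of $t\mapsto -\log t$ minus itself shifted, and can be done by simultaneously diagonalizing) or, more slickly, by recognizing $g(\bK) = \int_0^1 \tr\bigl((\bK+s\bA)^{-1}\bA\bigr)\,ds$ and noting the integrand is convex in $\bK$ for each $s$ because $\bK\mapsto(\bK+s\bA)^{-1}$ is matrix-convex and $\bA\succeq0$, so the trace against $\bA$ preserves convexity; integrating preserves it. Either route also handles the restriction to $\cK_\bPh$ since that set is convex and $\bK_\bPh$ ranges over a subset of the PD cone (with $\bI$ blocks on the diagonal, so strictly positive definite when $\sigma_\mrm{max}(\bPh)<1$, with a closure argument at the boundary).

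The main obstacle I anticipate is not any single inequality but packaging: making sure the decomposition of $I(\rvbx;\rvby_\mrm{r}\mid\rvby_\mrm{e})$ cleanly separates into a part depending only on $\bKP$, a part depending only on $\bK_\bPh$, and a genuinely coupled $\log\det(\bH\bKP\bH^\dagger + \bK_\bPh)$ term, and then verifying that the coupled term has the right curvature in each variable separately — jointly concave-in-$\bKP$/convex-in-$\bK_\bPh$ behavior of $\log\det(\text{linear} + \text{linear})$. The $\log\det$ of an affine map is concave, which immediately gives the $\bKP$ side of the coupled term; the subtlety is entirely on the $\bK_\bPh$ side, where we need the \emph{difference} $\log\det(\bA+\bK_\bPh) - \log\det\bK_\bPh$ to be convex, and I'd present the integral representation above as the cleanest self-contained proof, deferring rank-deficiency of $\bKP$ to a routine continuity/compactness remark.
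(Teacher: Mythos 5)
Your decomposition of $R_+$ into $\log\det(\bH\bKP\bH^\dagger+\bK_\bPh)-\log\det\bK_\bPh-\log\det(\bHe\bKP\bHe^\dagger+\bI)$ is the same starting point the paper uses, and the two halves of your argument both land on correct facts, so the proposal is sound in outline. Where it differs, and where it could be tightened:

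For convexity in $\bK_\bPh$, the paper simply cites the known result that $I(\rvbx;\rvbyr,\rvbye)$ is convex in the joint noise covariance (a reference to Diggavi--Cover), whereas you give a self-contained argument via the integral representation $\log\det(\bA+\bK)-\log\det\bK=\int_0^1\tr\bigl((\bK+s\bA)^{-1}\bA\bigr)\,ds$ together with operator convexity of the inverse and monotonicity of $\tr(\bA\,\cdot)$ for $\bA\succeq\mathbf 0$. That is a genuine improvement in self-containedness and is correct; it would be worth keeping even if you ultimately also cite the standard fact.

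For concavity in $\bKP$, your proposal hedges between two sketches, and the first one (``rewrite via the matrix inversion lemma to get $\log\det(\bKP^{-1}+\bHe^\dagger\bHe)^{-1}$-type expressions plus a $\log\det\bKP$ term'') is the weaker route: with a nontrivial cross-covariance $\oPh$ the Schur complement $\bLa(\bKP)=\bI+\bHr\bKP\bHr^\dagger-(\bPh+\bHr\bKP\bHe^\dagger)(\bI+\bHe\bKP\bHe^\dagger)^{-1}(\bPh^\dagger+\bHe\bKP\bHr^\dagger)$ does not reduce to anything that cleanly simple, and passing through $\bKP^{-1}$ also forces you to manage rank deficiency by a separate limiting argument. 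The paper gets this in one line by invoking the \emph{joint matrix concavity of the Schur complement} of a partitioned PSD matrix in its blocks (a cited fact from Bhatia): the blocks are affine in $\bKP$, hence $\bLa(\bKP)$ is matrix-concave, and then $\log\det$ is concave and matrix-monotone, so the composition is concave. Your second alternative --- writing $h(\rvbyr\mid\rvbye)=\min_{\bT}h(\rvbyr-\bT\rvbye)$ as a pointwise infimum of functions each of which is $\log\det$ of an affine-in-$\bKP$ covariance --- would also close the gap cleanly and is essentially the variational form of the same Schur-complement fact; I would commit to one of these two rather than leave the choice open. Finally, note that the paper explicitly flags the case of singular $\bK_\bPh$ (where $\log\det\bK_\bPh$ is not finite) as needing a separate SVD-based reduction; your proposal treats this only with a passing remark, so that case should be stated and deferred or handled, not glossed.
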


\begin{proof}
Recall  that  $R_+(\bKP,\bK_\bPh)=I(\rvbx;\rvbyr,\rvbye)-I(\rvbx;\rvbye)$, with  $\rvbx \sim \CN(0,\bKP)$ and $[\rvbzr^\dagger~~\rvbze^\dagger]^\dagger \sim \CN(0, \bK_\bPh)$. For the convexity in $\bK_\bPh$, note that, $I(\rvbx;\rvbye)$ does not depend on $\bK_\bPh$, and $I(\rvbx;\rvbyr,\rvbye)$ is  known (see e.g.,~\cite{diggaviCover01}) to be convex in $\bK_\bPh$. For the concavity in $\bKP$, note that when $\bK_\bPh \succ \mathbf{0}$, 
we can express 

\vspace{-1em}

\begin{equation}R_+(\bKP,\bK_\bPh) = \log\det {{\bLa(\bKP)}} -
  \log\det\bK_\bPh, \label{eq:RSchurDef}\end{equation} 
where

\vspace{-1em}

\begin{multline}
\bLa(\bKP) \defeq \bI + \bHr\bKP\bHr^\dagger  -\\
(\bPh + \bHr \bKP\bHe^\dagger) (\bI +\bHe\bKP\bHe^\dagger)^{-1}(\bPh^\dagger + \bHe\bKP\bHr^\dagger) \label{eq:lamDef}
\end{multline} 
is the Schur compliment of the matrix\begin{equation}\begin{bmatrix}\bI + \bHr\bKP\bHr^\dagger & \bPh + \bHr \bKP\bHe^\dagger \\ \bPh^\dagger + \bHe\bKP\bHr^\dagger & \bI +\bHe\bKP\bHe^\dagger\end{bmatrix}.\label{eq:fullMatrix}\end{equation}
Since the Schur complement is jointly concave in the constituent
matrices \cite[page 21, Corollary 1.5.3]{bhatia:PSD},
which in turn are linear in $\bKP$, it follows that $\bLa(\bKP)$  is
concave in $\bKP$ and hence from the composition theorem we have that
$R_+(\bKP,\bK_\bPh)$ is concave\footnote{The concavitiy result can also be established via~\cite[pg. 506, Theorem 16.9.1]{coverThomas}, by observing that $\bI + \bHe\bKP\bHe^\dagger$ is a minor of the matrix in~\eqref{eq:fullMatrix}.  } in $\bKP$.  The case when $\bK_\bPh$ is singular, can be handled via the singular value decomposition of $\bPh$, and will be treated in the full paper.
\end{proof}

Notice that both the domain sets $\cK_\mrm{P}$ and $\cK_\bPh$
are convex and compact, hence the existence of a saddle point solution
$(\obKP,\obKPh)$ is established via the minimax theorem \cite{bertsekasNedicOzdaglar}.  

In the sequel, we define $\oPh$ via
\begin{equation} \obKPh =
\begin{bmatrix}\bI_{n_\mrm{r}} & \oPh\\\oPh^\dagger & \bI_{n_\mrm{e}}\end{bmatrix}.
\label{eq:oPh-def}
\end{equation}

\subsection{Least favorable noise condition}
\label{subsec:optNoise}

From~\eqref{eq:RKbKphSaddle}, we have that
\begin{equation}
\obKPh \in \arg\min_{\bK_\bPh \in \cK_\bPh}R_+(\obKP,\bK_\bPh).\label{eq:noiseMin}
\end{equation}
The optimality conditions associated with~\eqref{eq:noiseMin} yield the following.
\begin{lemma}
Suppose that $(\obKP,\obKPh)$ is a saddle point solution to the
minimax problem in~\eqref{eq:capacity}. Then
\begin{equation}
(\bHr -\obTh\bHe)\obKP(\oPh^\dagger\bHr
    -\bHe)^\dagger = \mathbf{0}.
\label{eq:NoiseCondn}
\end{equation}
where $\oPh$ is as defined via \eqref{eq:oPh-def} and 
\begin{equation} 
\obTh   = (\bHr\obKP\bHe^\dagger + \oPh)(\bI + \bHe\obKP\bHe^\dagger)^{-1}.
\label{eq:obTh-def}
\end{equation}
\label{lem:NoiseCondn} 
\end{lemma}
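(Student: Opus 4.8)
The plan is to derive \eqref{eq:NoiseCondn} from the first-order (KKT) optimality conditions of the inner minimization \eqref{eq:noiseMin}. By Claim~\ref{claim:concavity}, the map $\bK_\bPh\mapsto R_+(\obKP,\bK_\bPh)$ is convex on $\cK_\bPh$; parametrizing $\cK_\bPh$ by its off-diagonal block $\bPh$ with the convex constraint $\sigma_\mrm{max}(\bPh)\le1$ (equivalently $\bI-\bPh\bPh^\dagger\succeq\mathbf{0}$), Slater's condition holds with $\bPh=\mathbf{0}$, so the KKT conditions for \eqref{eq:noiseMin} are necessary. I would first set aside the degenerate case in which $\obKPh$ is singular, reducing it to the nonsingular case via the singular value decomposition of $\oPh$ (the reduction indicated after Claim~\ref{claim:concavity}); then, writing $\bH\defeq[\bHr^\dagger,\bHe^\dagger]^\dagger$ and abbreviating the blocks of the matrix \eqref{eq:fullMatrix} at the saddle point as $\bD=\bI+\bHr\obKP\bHr^\dagger$, $\bB=\oPh+\bHr\obKP\bHe^\dagger$, $\bW=\bI+\bHe\obKP\bHe^\dagger$, one has $\obTh=\bB\bW^{-1}$, $\bLa=\bLa(\obKP)=\bD-\obTh\bB^\dagger$, and the representation $R_+(\obKP,\bK_\bPh)=\log\det(\bH\obKP\bH^\dagger+\bK_\bPh)-\log\det\bK_\bPh$ (equivalently the Schur form \eqref{eq:RSchurDef}).

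The gradient computation is routine: the differential of $R_+(\obKP,\bK_\bPh)$ in $\bK_\bPh$ is $\langle(\bH\obKP\bH^\dagger+\bK_\bPh)^{-1}-\bK_\bPh^{-1},\,d\bK_\bPh\rangle$, and restricting $d\bK_\bPh$ to its $(1,2)$ block $d\bPh$ and using the block-inversion formula, the derivative in $\bPh$ is proportional to the $(1,2)$ block $-\bLa^{-1}\obTh+(\bI-\oPh\oPh^\dagger)^{-1}\oPh$. Introducing a multiplier $\bM\succeq\mathbf{0}$ for $\bI-\bPh\bPh^\dagger\succeq\mathbf{0}$, stationarity and complementary slackness read
\begin{equation*}
\bLa^{-1}\obTh=\bigl[(\bI-\oPh\oPh^\dagger)^{-1}+\bM\bigr]\oPh,\qquad \bM\bigl(\bI-\oPh\oPh^\dagger\bigr)=\mathbf{0}.
\end{equation*}

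The rest is algebra. Writing $\mathbf{N}\defeq(\bI-\oPh\oPh^\dagger)^{-1}+\bM$, the first relation is $\obTh=\bLa\mathbf{N}\oPh$; complementary slackness together with $(\bI-\oPh\oPh^\dagger)^{-1}\oPh\oPh^\dagger=(\bI-\oPh\oPh^\dagger)^{-1}-\bI$ gives $\mathbf{N}\oPh\oPh^\dagger=\mathbf{N}-\bI$, hence $\obTh\oPh^\dagger=\bLa(\mathbf{N}-\bI)$; substituting $\bLa\mathbf{N}=\obTh\oPh^\dagger+\bLa$ into $\obTh=\bLa\mathbf{N}\oPh$ yields the multiplier-free identity $\obTh(\bI-\oPh^\dagger\oPh)=\bLa\oPh$. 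It then remains to expand $(\bHr-\obTh\bHe)\obKP(\oPh^\dagger\bHr-\bHe)^\dagger$ and, using $\bHr\obKP\bHr^\dagger=\bD-\bI$, $\bHe\obKP\bHe^\dagger=\bW-\bI$, $\bHr\obKP\bHe^\dagger=\bB-\oPh$, $\bHe\obKP\bHr^\dagger=\bB^\dagger-\oPh^\dagger$ together with $\obTh\bW=\bB$ and $\bLa=\bD-\obTh\bB^\dagger$, to check that
\begin{equation*}
(\bHr-\obTh\bHe)\obKP(\oPh^\dagger\bHr-\bHe)^\dagger=\bLa\oPh-\obTh(\bI-\oPh^\dagger\oPh),
\end{equation*}
so the right-hand side vanishes by the stationarity identity, which is \eqref{eq:NoiseCondn}.

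The delicate point is that at the optimum $\oPh$ need not be a strict contraction, so the multiplier $\bM$ cannot simply be dropped; this is why the argument must run through complementary slackness, and why the conclusion naturally takes the multiplier-free form $\obTh(\bI-\oPh^\dagger\oPh)=\bLa\oPh$ rather than the more transparent $\obTh=\bLa(\bI-\oPh\oPh^\dagger)^{-1}\oPh$ that would hold only at an interior optimum. A secondary point requiring care is justifying the $\log\det$ representation when $\obKPh$ is singular, which is the reduction deferred to the full version of the paper.
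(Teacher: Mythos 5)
Your argument is correct and takes a genuinely different route from the paper's. The paper works with the full covariance $\bK_\bPh$ and a block-diagonal multiplier $\bUp=\diag(\bUp_1,\bUp_2)$ enforcing the identity diagonal blocks; its KKT condition is $\bHt\obKP\bHt^\dagger = \obKPh\bUp(\obKPh + \bHt\obKP\bHt^\dagger)$ (eq.~\eqref{eq:noiseKKTCondnb}), from which $\bUp_1$ and $\bUp_2$ are eliminated in Appendix~\ref{app:NoiseRel} by pairing the four block equations. You instead parametrize the feasible set by the off-diagonal block $\bPh$ alone, treat $\bI - \bPh\bPh^\dagger \succeq \mathbf{0}$ as an inequality constraint with a single PSD multiplier $\bM$, and eliminate $\bM$ via complementary slackness to arrive at the multiplier-free identity $\obTh(\bI - \oPh^\dagger\oPh) = \bLa\oPh$; a short block computation (using $\obTh\bW=\bB$ and $\bLa=\bD-\obTh\bB^\dagger$ in your notation) then shows $(\bHr-\obTh\bHe)\obKP(\oPh^\dagger\bHr-\bHe)^\dagger = \bLa\oPh - \obTh(\bI-\oPh^\dagger\oPh)$, which vanishes. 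I verified the algebra and it is right. What each buys: yours avoids the four-equation elimination and produces a cleaner intermediate identity that makes the degradedness structure transparent; the paper's relation \eqref{eq:noiseKKTCondnb} contains no inverse of $\obKPh$ (nor of $\bI-\oPh\oPh^\dagger$), which is precisely why it survives the singular-$\obKPh$ case unchanged after passing to the reduced problem \eqref{eq:noiseSingularMinProblem}. Your stationarity equation does invert $(\bI - \oPh\oPh^\dagger)$, so you rightly set the singular case aside; just note that the reduction the paper uses \emph{for this lemma} is the $\bW,\bOmeg$ construction in the singular-$\obKPh$ part of \secref{subsec:optNoise}, not the SVD of $\oPh$ in \eqref{eq:oPhSVD}, which belongs to the proof of Lemma~\ref{lem:OptCovCondn}.
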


We will see subsequently, that~\eqref{eq:NoiseCondn} has a useful structure, which can be combined with the optimality condition associated with $\obKP$. The proof  is most direct when the noise covariance $\obKPh$ at the saddle point is non-singular. Hence we will establish~\eqref{eq:NoiseCondn} in this special case first and then consider the case  when $\obKPh$ is singular.

\subsubsection{$\obKPh$ is non-singular.}
The Lagrangian associated with the minimization~\eqref{eq:noiseMin} is \begin{equation}\cL_\bPh(\bK_\bPh, \bUp) = R_+(\obKP,\bK_\bPh) + \tr(\bUp \bK_\bPh), \end{equation}
where the dual variable
\begin{equation}\label{eq:UpsDefn}\bUp=
{\kbordermatrix{& n_\mrm{r} & n_\mrm{e} \cr
                        n_\mrm{r} & \bUp_1 & \mathbf{0} \\
n_\mrm{e}  & \mathbf{0} & \bUp_2}}
\end{equation}
is a block diagonal matrix corresponding to the constraint that the
noise covariance $\bK_\bPh$ must have identity matrices on its
diagonal. The associated Karush-Kuhn-Tucker (KKT) conditions
yield
\begin{equation*} 
\nabla_{\bK_\bPh}R_+(\obKP,\bK_\bPh)\bigr|_{\obKPh} + \bUp =
  \mathbf{0},
\end{equation*}
where
\begin{align} 
\nabla_{\bK_\bPh} &R_+(\obKP,\bK_\bPh)\bigr|_{\obKPh} \\
&= \nabla_{\bK_\bPh}\left[\log\det(\bK_\bPh +\bHt\obKP\bHt^\dagger)\!-\!\log\det(\bK_\bPh)\right]\biggr|_{\obKPh}\nonumber\\
&=(\obKPh + \bHt\obKP\bHt^\dagger)^{-1} - \obKPh^{-1},
\label{eq:noiseKKTCondn}
\end{align}
with the
convenient notation
\begin{equation} 
\bHt = \begin{bmatrix} \bHr \\ \bHe\end{bmatrix},
\label{eq:bHt-def}
\end{equation}
which in turn implies that
\begin{equation} 
\bHt\obKP\bHt^\dagger = \obKPh\bUp(\obKPh + \bHt\obKP\bHt^\dagger).
\label{eq:noiseKKTCondnb}\end{equation}
The relation in~\eqref{eq:NoiseCondn} follows from~\eqref{eq:noiseKKTCondnb} through a straightforward computation that exploits the block diagonal structure of $\bUp$, which we provide in Appendix~\ref{app:NoiseRel}.

\subsubsection{$\obKPh$ is singular} When the noise covariance $\obKPh$ is singular, as we now show, ~\eqref{eq:noiseKKTCondnb} still holds. Note that this will complete the proof, since the steps in Appendix~\ref{app:NoiseRel} that simplify~\eqref{eq:noiseKKTCondnb} do not require that $\obKPh$ be non-singular. 

In the singular case we define another optimization problem whose optimality conditions yield~\eqref{eq:noiseKKTCondnb}. An analogous approach has been taken earlier by Yu~\cite{yuwei:06} for dealing with singular noise for the MIMO broadcast channel.
 
Suppose that \begin{equation}
\obKPh= \bW \bOmeg\bW^\dagger,
\label{eq:Wdecomp}
\end{equation}
where $\bW$ is a matrix with orthogonal columns, i.e., $\bW^\dagger\bW = \bI$ and $\bOmeg$ is a non-singular matrix. We first note that it must also be the case that
\begin{equation}
\label{eq:HGWrel}
\bHt = \bW \bG,
\end{equation}
i.e., the column space of $\bHt$ is a subspace of the column space of $\bW$. If this were not the case, by receiving a signal in the null space of $\bW$, one can obtain arbitrarily high rate, i.e., 
\begin{equation}
\max_{\bKP\in\cK_\mrm{P}}R_+(\bKP,\obKPh) = \infty,
\end{equation}
which contradicts that $\obKPh$ is a saddle point solution.

Now observe that $\bOmeg$ in~\eqref{eq:Wdecomp} is a solution to the following minimization problem, 
\begin{equation}
\begin{aligned}
&\min_{\bOm \in \cOm }R_\Omega(\bOm ),\\
&\quad R_\Omega(\bOm ) = \log\frac{\det(\bG\obKP\bG^\dagger +\bOm) }{\det(\bOm )},\\
&\quad \cOm = \left\{\bOm \Biggm| \bW \bOm \bW^\dagger = \begin{bmatrix}\bI_{n_r} & \bPh \\ \bPh^\dagger & \bI_{n_e} \end{bmatrix} \succeq 0 \right\}.
\end{aligned}\label{eq:noiseSingularMinProblem}
\end{equation}
Indeed $\bOmeg$ is a feasible point
for~\eqref{eq:noiseSingularMinProblem}. Also with
$\rvbz_\Omega\sim\CN(\mathbf{0},\Omega)$, one can show that  
\begin{align}R_\Omega(\bOm)=
 R_+(\obKP,\bW\Omega\bW^\dagger) + \log\det(\bI + \bHe\obKP\bHe^\dagger)\label{eq:R+RomegRel},\end{align}
from which the optimality of $\bOmeg$ readily follows. 
The optimality conditions associated with the minimization problem~\eqref{eq:noiseSingularMinProblem} give
\begin{equation}
\begin{aligned}
\bOmeg^{-1}-(\bG\obKP\bG^\dagger + \bOmeg)^{-1}=\bW^\dagger\bUp\bW,\\
\Rightarrow \bG\obKP\bG^\dagger = \bOmeg\bW^\dagger \bUp \bW(\bOmeg + \bG\obKP\bG^\dagger)
\end{aligned}\label{eq:OmegKKT}\end{equation}
where $\bUp$ has the block diagonal form in~\eqref{eq:UpsDefn}. Multiplying the left and right and side of~\eqref{eq:OmegKKT} with $\bW$ and $\bW^\dagger$ respectively and using ~\eqref{eq:Wdecomp} and~\eqref{eq:HGWrel} we have that
\begin{equation}
\bHt\obKP\bHt^\dagger = \obKPh\bUp (\obKPh + \bHt\obKP\bHt^\dagger),
\end{equation}
which coincides with ~\eqref{eq:noiseKKTCondnb}.

\subsection{Optimal Input Covariance Property}
\label{subsec:optCov}

Given that $(\obKP,\obKPh)$ is a saddle point solution in~\eqref{eq:capacity} we have from~\eqref{eq:RKbKphSaddle} that
\begin{equation}
\obKP \in \argmax_{\bK_\mrm{P} \in \cK_\mrm{P}}R_+(\bK_\mrm{P},\obKPh).
\label{eq:KpOpt}
\end{equation}
We show that~\eqref{eq:KpOpt} in turn implies the following property.
\begin{lemma}
\label{lem:OptCovCondn}
Suppose that $\obKP = \bS\bS^\dagger$, where
$\bS$ has a full column rank.  Then
provided $(\bHr-\obTh\bHe) \neq \mathbf{0}$, the
  matrix\begin{equation}\label{eq:Mdef}\bM
  =(\bHr-\obTh\bHe)\bS\end{equation} has a full column rank,
where $\oPh$ and $\obTh$ are defined via \eqref{eq:oPh-def} and
\eqref{eq:obTh-def}, respectively.
\end{lemma}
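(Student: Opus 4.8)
The plan is to argue by contradiction: suppose $\bM = (\bHr-\obTh\bHe)\bS$ does not have full column rank, so there is a nonzero vector $\bv$ with $\bM\bv = \mathbf{0}$. I would then show that this lets us strictly increase $R_+(\bKP,\obKPh)$ over $\obKP$ while staying in $\cK_\mrm{P}$, contradicting the optimality~\eqref{eq:KpOpt} of $\obKP$. The natural mechanism is to perturb the input covariance: since $\obKP = \bS\bS^\dagger$, directions $\bS\bv$ lie in the transmit subspace, and a vector $\bv$ in the kernel of $\bM$ identifies a ``free'' direction --- one along which the eavesdropper-after-MMSE-subtraction sees nothing, so adding power there should only help. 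Concretely, I expect the right perturbation to be a rank-one bump built from $\bS\bv$, rescaled to respect the trace constraint.

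The key computational step is to use the Schur-complement form of the objective established in Claim~\ref{claim:concavity}. Recall from~\eqref{eq:RSchurDef}--\eqref{eq:lamDef} that, at a fixed $\obKPh$ (assume first $\obKPh \succ \mathbf{0}$), $R_+(\bKP,\obKPh) = \log\det\bLa(\bKP) - \log\det\obKPh$, and $\obTh$ in~\eqref{eq:obTh-def} is precisely the MMSE-type matrix so that $\bLa(\bKP)$ can be rewritten as $\bI + (\bHr-\obTh\bHe)\bKP(\bHr-\obTh\bHe)^\dagger$ plus terms that are constant or already accounted for at the optimum --- this is the standard ``completion of squares'' identity for the conditional-entropy expression. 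Under this rewriting, the dependence of the objective on the component of $\bKP$ lying in the range of $\bS$ is, to first order, governed by $(\bHr-\obTh\bHe)$ acting on that range, i.e.\ by $\bM$. If $\bM\bv = \mathbf{0}$, then moving covariance toward $\bS\bv\bv^\dagger\bS^\dagger$ leaves the ``$\bHr-\obTh\bHe$'' quadratic form unchanged in that direction, so the gradient of $R_+$ in that direction is nonnegative; combined with the fact that we can free up power elsewhere (or that the objective is strictly monotone in added noise-free signal power along a genuinely new direction), one gets a strict improvement, contradicting~\eqref{eq:KpOpt}. I would also need to invoke Lemma~\ref{lem:NoiseCondn}, i.e.~\eqref{eq:NoiseCondn}, which says $(\bHr-\obTh\bHe)\obKP(\oPh^\dagger\bHr-\bHe)^\dagger = \mathbf{0}$, to control how the perturbation interacts with the already-active part of the covariance and to ensure the bump direction is ``new'' rather than already inside the effective signal space.

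The main obstacle I anticipate is twofold. First, making the perturbation argument rigorous requires handling the trace (power) constraint: one cannot simply add $\eps\, \bS\bv\bv^\dagger\bS^\dagger$ without removing power somewhere, so I would either (a) show the constraint is inactive in the relevant direction, or (b) scale down $\obKP$ slightly and add the bump, then check that the net first-order change in $R_+$ is strictly positive --- this needs the degradedness/least-favorable-noise structure from~\eqref{eq:NoiseCondn} to guarantee the scaling-down costs nothing to first order along the kernel direction. Second, the case $\obKPh$ singular must be treated separately, presumably by the same $\bW$/$\bG$ reduction used in Section~\ref{subsec:optNoise}: pass to the reduced channel $\bG$, run the full-rank argument there, and pull back. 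I expect the bookkeeping in the singular case to be the most delicate part, though conceptually it adds nothing new beyond what was already done for the noise condition.
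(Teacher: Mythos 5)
Your contradiction-by-perturbation strategy is viable in spirit, but the specific perturbation you propose runs in the wrong direction and would not yield a contradiction. If $\bM\bv=\mathbf{0}$ with $\bv\neq\mathbf{0}$, note that $\rvbyr-\obTh\rvbye = (\bHr-\obTh\bHe)\rvbx + (\rvbzr-\obTh\rvbze)$, so $\bS\bv$ is a direction along which the \emph{intended receiver, conditioned on the eavesdropper}, receives nothing --- not, as you write, one where ``the eavesdropper-after-MMSE-subtraction sees nothing.'' Power along $\bS\bv$ is therefore wasted, and \emph{adding} more there helps nothing. Concretely, your plan (b) --- scale $\obKP$ down by $(1-\delta)$ and add a bump $\eps\,\bS\bv\bv^\dagger\bS^\dagger$ --- changes $R_+$ to first order by $-\delta\tr\bigl(\bM^\dagger[\bLa(\obKP)]^{-1}\bM\bigr) + \eps\,\bv^\dagger\bM^\dagger[\bLa(\obKP)]^{-1}\bM\bv = -\delta\tr\bigl(\bM^\dagger[\bLa(\obKP)]^{-1}\bM\bigr) \le 0$, so optimality is not contradicted. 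The perturbation must run the other way: remove $\eps\,\bS\bv\bv^\dagger\bS^\dagger$ (feasible since $\obKP - \eps\bS\bv\bv^\dagger\bS^\dagger = \bS(\bI-\eps\bv\bv^\dagger)\bS^\dagger \succeq \mathbf{0}$ for small $\eps$ when $\|\bv\|=1$) and reallocate the freed trace to any $\bw$ with $(\bHr-\obTh\bHe)\bw\neq\mathbf{0}$, which exists by the hypothesis $\bHr-\obTh\bHe\neq\mathbf{0}$. The removal has zero first-order effect, the addition has strictly positive first-order effect (the gradient from Appendix~\ref{app:KKT} is $(\bHr-\obTh\bHe)^\dagger[\bLa(\obKP)]^{-1}(\bHr-\obTh\bHe)$ with $\bLa(\obKP)\succ\mathbf{0}$), and concavity of $R_+$ in $\bKP$ then contradicts optimality. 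This is the gradient form of the same ``no power in $\Null(\bHr-\obTh\bHe)$'' fact that the paper establishes directly in Claim~\ref{claim:FullRankCase} via the SVD of $\bH_\mrm{eff}$ and Hadamard's inequality.

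Three smaller points. Your ``completion of squares'' claim --- that $\bLa(\bKP)$ equals a fixed-$\obTh$ quadratic form in $\bKP$ up to constants --- is false for general $\bKP$: $\bLa(\bKP)$ uses the $\bKP$-dependent LMMSE coefficient $\Theta(\bKP)$, and agrees with the frozen-$\obTh$ quantity $\bGa(\bKP)$ only at $\bKP=\obKP$; that coincidence is exactly what the paper's Claim~\ref{claim:HSaddlePoint} proves, by matching KKT systems, not by an algebraic identity. If you take the gradient route above you can bypass Claim~\ref{claim:HSaddlePoint} entirely, so you should not invoke it loosely. Lemma~\ref{lem:NoiseCondn} plays no role in this lemma at all (the paper combines it with Lemma~\ref{lem:OptCovCondn} only later, in Section~\ref{subsec:SaddleValue}); your appeal to it to keep the bump ``new'' is a red herring. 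Finally, the singular-$\obKPh$ case in the paper does \emph{not} reuse the $\bW/\bG$ device from Section~\ref{subsec:optNoise}: it decomposes $\oPh$ by its SVD (Claim~\ref{claim:SingularCondns}) and passes to the reduced observation $\bU_2^\dagger\rvbyr$, a different reduction that would need its own justification if you wanted to replace it.
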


The rest of this subsection is devoted to the proof of Lemma~\ref{lem:OptCovCondn}, and accordingly we assume that the saddle point solution $(\obKP,\obKPh)$ satisfies  $(\bHr-\obTh\bHe) \neq \mathbf{0}$.  As with Lemma~\ref{lem:NoiseCondn}, the proof is most direct when $\obKPh$ is non-singular. Hence we will treat this case first and consider the case when $\obKPh$ is singular subsequently.

\subsubsection{$\obKPh$ is non-singular}
In this case, we can write the optimality condition~\eqref{eq:KpOpt} as
\begin{align}
\obKP &\in \argmax_{\bK_\mrm{P} \in \cK_\mrm{P}}R_+(\bK_\mrm{P},\obKPh)\nonumber\\
&=\argmax_{\bK_\mrm{P} \in \cK_\mrm{P}}h(\rvbyr\mid \rvbye)\nonumber\\
&=\argmax_{\bK_\mrm{P} \in \cK_\mrm{P}}h\left(\rvbyr - \Theta(\bKP) \rvbye\right),\label{eq:mmseCondn1}
\end{align}
where $\Theta(\bKP)= (\bHr\bKP\bHe^\dagger + \oPh)(\bHe\bKP\bHe^\dagger+\bI)^{-1}$ is the linear minimum mean squared estimation coefficient of $\rvbyr$ given $\rvbye$. Instead of directly working with the optimality conditions associated with~\eqref{eq:mmseCondn1} we reformulate the problem as below.
\begin{claim}
\label{claim:HSaddlePoint}
Suppose that $\obKPh\succ {\bf 0}$ and define 
\begin{equation}
\begin{aligned}
&\cH(\bK_\mrm{P}) \defeq h(\rvbyr-\obTh\rvbye)
=\log\det(\bGa(\bKP)),
\end{aligned}\label{eq:Hdefn}\end{equation}
where
\begin{multline}
\bGa(\bK_\mrm{P}) \defeq \bI + \obTh\obTh^\dagger -\obTh\oPh^\dagger-\oPh\obTh^\dagger +\\ (\bHr-\obTh\bHe)\bKP(\bHr-\obTh\bHe)^\dagger.
\label{eq:GammaDef}
\end{multline}
Then,  
\begin{equation}
\obKP \in \argmax_{\bKP \in \cK_\mrm{P}}\cH(\bKP).
\label{eq:secondOpt}
\end{equation}
\end{claim}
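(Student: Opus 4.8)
The plan for Claim~\ref{claim:HSaddlePoint} is to exploit that $\obTh$ is the linear MMSE coefficient $\Theta(\obKP)$ of $\rvbyr$ given $\rvbye$ \emph{evaluated at $\obKP$} (cf.~\eqref{eq:obTh-def}): freezing this coefficient and then varying $\bKP$ can only worsen the estimate, so $\cH(\bKP)$ upper bounds $h(\rvbyr\mid\rvbye)$ pointwise, with equality at $\obKP$; this, together with the optimality of $\obKP$ for $h(\rvbyr\mid\rvbye)$ from~\eqref{eq:mmseCondn1} and a convexity argument, gives~\eqref{eq:secondOpt}. First, since $\rvbyr-\obTh\rvbye=(\bHr-\obTh\bHe)\rvbx+(\rvbzr-\obTh\rvbze)$ and $\rvbx$ is independent of the noise, the covariance of $\rvbyr-\obTh\rvbye$ is exactly the matrix $\bGa(\bKP)$ of~\eqref{eq:GammaDef}, which justifies~\eqref{eq:Hdefn}; moreover $\bGa(\bKP)\succeq\cov(\rvbzr-\obTh\rvbze)\succeq\cov(\rvbzr\mid\rvbze)=\bI-\oPh\oPh^\dagger$, and this last matrix is positive definite precisely because $\obKPh\succ\mathbf{0}$, i.e. $\sigma_\mrm{max}(\oPh)<1$. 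Hence $\cH(\bKP)=\log\det\bGa(\bKP)$ is finite and smooth, and --- since $\bGa$ is affine in $\bKP$ and $\log\det$ is concave on the positive-definite cone --- concave in $\bKP$.

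Step~1 is a pointwise comparison. For every fixed $\bKP$ the LMMSE coefficient $\Theta(\bKP)$ minimizes $\cov(\rvbyr-\bTh\rvbye)$ over all matrices $\bTh$ in the positive-semidefinite order, and $\det$ is monotone in that order, so $\cH(\bKP)=h(\rvbyr-\obTh\rvbye)\ge h(\rvbyr-\Theta(\bKP)\rvbye)=h(\rvbyr\mid\rvbye)=:g(\bKP)$, with equality at $\bKP=\obKP$ since there $\obTh=\Theta(\obKP)$. I also note $g(\bKP)=R_+(\bKP,\obKPh)+h(\rvbzr\mid\rvbze)$ with the second term independent of $\bKP$, so by Claim~\ref{claim:concavity} $g$ is concave, and by~\eqref{eq:mmseCondn1} (equivalently~\eqref{eq:KpOpt}) $\obKP\in\argmax_{\bKP\in\cK_\mrm{P}}g(\bKP)$.

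Step~2 transfers optimality from $g$ to $\cH$. The difference $D(\bKP)\defeq\cH(\bKP)-g(\bKP)$ is nonnegative on a neighbourhood of $\obKP$ in the space of Hermitian matrices (Step~1), vanishes at $\obKP$, and is differentiable there (using that $\bGa(\bKP)$ and $\cov(\rvbyr\mid\rvbye)$ stay positive definite near $\obKP$); hence $\obKP$ is an unconstrained local minimum of $D$ and $\nabla\cH(\obKP)=\nabla g(\obKP)$. Since $g$ is concave and attains its maximum over the convex set $\cK_\mrm{P}$ at $\obKP$, the first-order optimality condition gives $\langle\nabla g(\obKP),\bKP-\obKP\rangle\le0$ for all $\bKP\in\cK_\mrm{P}$, hence also $\langle\nabla\cH(\obKP),\bKP-\obKP\rangle\le0$; and then concavity of $\cH$ gives $\cH(\bKP)\le\cH(\obKP)+\langle\nabla\cH(\obKP),\bKP-\obKP\rangle\le\cH(\obKP)$ for every $\bKP\in\cK_\mrm{P}$, which is~\eqref{eq:secondOpt}.

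The step I expect to need the most care is the gradient-matching argument: one must check that $\obKP$ is genuinely \emph{interior} to the region on which $D=\cH-g$ is defined and continuously differentiable, so that ``$D\ge0$, $D(\obKP)=0$'' truly forces $\nabla D(\obKP)=\mathbf{0}$. This is exactly where the hypothesis $\obKPh\succ\mathbf{0}$ is used --- it keeps both $\bGa(\bKP)$ and $\cov(\rvbyr\mid\rvbye)\big|_{\bKP}$ uniformly positive definite for $\bKP$ near $\obKP$, so $\cH$ and $g$ are smooth there --- and it is why the claim is restricted to nonsingular $\obKPh$; the singular case would be reduced to this one along the lines used for Lemma~\ref{lem:NoiseCondn}. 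The remaining ingredients are standard: the Gaussian identity $h(\rvbyr\mid\rvbye)=h(\rvbyr-\Theta(\bKP)\rvbye)$, monotonicity of $\det$ on the positive-semidefinite cone, and the first-order characterization of maximizers of a concave function over a convex set.
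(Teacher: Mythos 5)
Your argument is correct and takes a genuinely different route to the key step. The paper proves the gradient identity $\nabla_{\bKP}\cH(\bKP)\big|_{\obKP}=\nabla_{\bKP}R_+(\bKP,\obKPh)\big|_{\obKP}$ by direct computation: Appendix~B evaluates $\nabla_{\bKP}R_+$ via the block matrix inversion lemma and finds $(\bHr-\obTh\bHe)^\dagger[\bLa(\obKP)]^{-1}(\bHr-\obTh\bHe)$, and then (footnote) observes $\bLa(\obKP)=\bGa(\obKP)$ because $\obTh$ is the MMSE coefficient at $\obKP$; the KKT multipliers $(\lambda_0,\bPs_0)$ from~\eqref{eq:OptKpKKT} then satisfy~\eqref{eq:HmaxKKT} verbatim. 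You instead obtain the same gradient matching \emph{variationally}: since $\obTh=\Theta(\obKP)$, the function $\cH$ dominates $g(\bKP)\defeq h(\rvbyr\mid\rvbye)=R_+(\bKP,\obKPh)+\mathrm{const}$ everywhere and touches it at $\obKP$, so $\nabla D(\obKP)=\mathbf 0$ for $D=\cH-g$. This envelope-style route avoids the explicit gradient computation of Appendix~B entirely; what it buys is a proof that works at the level of ``LMMSE is optimal,'' what it costs is that you must certify $D\ge 0$ on a full Hermitian neighbourhood of $\obKP$, not just on the PSD cone, since $\obKP$ typically sits on the boundary of $\cK_\mrm{P}$.

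That last point is the one soft spot, which you flag yourself. Step~1 as written is the Gaussian/LMMSE argument, which presupposes $\bKP\succeq\mathbf 0$. To justify ``$D\ge 0$ on a Hermitian neighbourhood'' you need the purely algebraic identity
\begin{equation*}
\bGa(\bKP)=\bLa(\bKP)+\bigl(\obTh-\Theta(\bKP)\bigr)\bigl(\bI+\bHe\bKP\bHe^\dagger\bigr)\bigl(\obTh-\Theta(\bKP)\bigr)^\dagger,
\end{equation*}
which holds for every Hermitian $\bKP$ with $\bI+\bHe\bKP\bHe^\dagger$ invertible. Near $\obKP$ the middle factor stays positive definite (an open condition) and $\bLa(\bKP)$ stays positive definite (using $\obKPh\succ\mathbf 0$), so $\bGa(\bKP)\succeq\bLa(\bKP)\succ\mathbf 0$ and $D(\bKP)=\log\det\bGa(\bKP)-\log\det\bLa(\bKP)\ge 0$ on a genuine Hermitian neighbourhood, as required for $\nabla D(\obKP)=\mathbf 0$. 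With that identity spelled out, the proof is complete and matches the paper's conclusion while replacing its Appendix~B computation with a cleaner first-order argument.
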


\begin{remark}
The objective function in~\eqref{eq:secondOpt} is  similar to the one in~\eqref{eq:mmseCondn1}, but with $\obTh$ fixed, i.e., the variables  $\obTh$ and $\bKP$ are decoupled in~\eqref{eq:secondOpt}. This key step enables us to work with the simpler objective function in~\eqref{eq:secondOpt} and complete the proof. 
\end{remark}

\begin{proof}
To establish~\eqref{eq:secondOpt} note that since $\cH(\cdot)$ is a concave function in $\bK_\mrm{P} \in \cK_\mrm{P}$
and differentiable over  $\cK_\mrm{P}$, the optimality conditions associated with the Lagrangian
\begin{equation}
\cL_\Theta(\bK_\mrm{P},\lambda,\bPs) = \cH(\bK_\mrm{P})+ \tr(\bPs\bK_\mrm{P})-\lambda(\tr(\bK_\mrm{P})-P),
\end{equation}
are both necessary and sufficient. Thus $\bK_\mrm{P}$ is an optimal solution to~\eqref{eq:secondOpt}
if and only if there exists a $\lambda \ge 0$ and $\bPs \succeq 0$ such that\begin{equation}
\begin{aligned}
&(\bHr-\obTh\bHe)^\dagger[\bGa(\bK_\mrm{P})]^{-1}(\bHr-\obTh\bHe) + \bPs = \lambda\bI,\\
&\tr(\bPs \bK_\mrm{P})=0, \quad \lambda(\tr(\bK_\mrm{P})-P)=0,\label{eq:HmaxKKT}
\end{aligned}
\end{equation}
where $\bGa(\cdot)$ is defined in~\eqref{eq:GammaDef}.
These parameters  for $\obKP$ are obtained from the optimality conditions associated with~\eqref{eq:KpOpt}.

Since $R_+(\bK_\mrm{P},\obKPh)$ is differentiable at each
$\bK_{P} \in \cK_\mrm{P}$ whenever $\obKPh \succ {\bf 0}$, $\obKP$ satisfies the associated KKT conditions --- there exists a $\lambda_0 \ge 0$ and $\bPs_0 \succeq 0$ such that
\begin{equation}\begin{aligned}
\nabla_{\bK_\mrm{P}}R(\bK_\mrm{P},\obKPh)\Biggm|_{\obKP} +\bPs_0 =\lambda_0\bI\\
\lambda_0(\tr(\obKP)-P)=0,\quad \tr(\bPs_0 \obKP)=0.
\end{aligned}\label{eq:OptKpKKT}\end{equation}
We show in Appendix~\ref{app:KKT} that
\begin{equation}
\nabla_{\bK_\mrm{P}}R(\bK_\mrm{P},\obKPh)\Bigr|_{\obKP} \!=\!
(\bHr-\obTh\bHe)^\dagger[\bLa(\obKP)]^{-1}(\bHr-\obTh\bHe),
\end{equation} 
where 
$\bLa(\cdot)$,  defined in~\eqref{eq:lamDef}, satisfies
satisfies\footnote{To verify this relation, note that $\bGa(\bKP)$ is the variance of $\rvbyr-\obTh \rvbye$. When  $\bKP=\obKP$, note that $\obTh\rvbye$ is the MMSE estimate of  $\rvbyr$ given $\rvbye$   and $\bGa(\bKP)$ is the associated MMSE estimation error.} $\bLa(\obKP)=\bGa(\obKP)$. Hence the first condition
in~\eqref{eq:OptKpKKT} reduces to 
\begin{equation}
(\bHr-\obTh\bHe)^\dagger[\bGa(\obKP)]^{-1}(\bHr-\obTh\bHe) + \bPs_0 = \lambda_0\bI.
\label{eq:OptKpKKTa}
\end{equation}
Comparing~\eqref{eq:OptKpKKT} and~\eqref{eq:OptKpKKTa} with~\eqref{eq:HmaxKKT}, we note that
$(\obKP,\lambda_0,\bPs_0)$ satisfy the conditions in~\eqref{eq:HmaxKKT}, thus establishing~\eqref{eq:secondOpt}.
\end{proof}

\begin{claim}
Suppose that $\obKPh \succ {\bf 0}$ and $\hbKP$ be any optimal solution to
\begin{equation}
\hbKP \in \argmax_{\cK_\mrm{P}}\cH(\bKP).
\label{eq:KPMax}\end{equation} Suppose that $\bSP$ is a matrix with a
full column rank such that
\begin{equation}\hbKP =\bSP\bSP^\dagger\label{eq:FullRankSP}\end{equation}  then 
$(\bHr-\obTh\bHe)\bSP$ has a full column rank. \label{claim:FullRankCase}
\end{claim}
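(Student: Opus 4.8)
\medskip
\noindent\emph{Proof plan.} The plan is to argue by contradiction using a rank-one power-exchange perturbation of $\hbKP$. First I would reduce the assertion to a statement about null spaces: $(\bHr-\obTh\bHe)\bSP$ has full column rank if and only if $\Null(\bHr-\obTh\bHe)\cap\mathrm{range}(\bSP)=\{\mathbf{0}\}$, and since $\hbKP=\bSP\bSP^\dagger$ we have $\mathrm{range}(\bSP)=\mathrm{range}(\hbKP)$; so it suffices to show that $\mathrm{range}(\hbKP)$ contains no nonzero vector annihilated by $\bHr-\obTh\bHe$. I would also rewrite~\eqref{eq:GammaDef} in the affine form
\begin{equation*}
\bGa(\bKP)=\bGa(\mathbf{0})+(\bHr-\obTh\bHe)\,\bKP\,(\bHr-\obTh\bHe)^\dagger,
\end{equation*}
where the constant term $\bGa(\mathbf{0})=\bI+\obTh\obTh^\dagger-\obTh\oPh^\dagger-\oPh\obTh^\dagger$ is the covariance of $\rvbzr-\obTh\rvbze$, i.e.\ the image of $[\rvbzr^\dagger,\rvbze^\dagger]^\dagger\sim\CN(\mathbf{0},\obKPh)$ under a surjective linear map; hence $\bGa(\mathbf{0})\succ\mathbf{0}$, and therefore $\bGa(\bKP)\succ\mathbf{0}$ for every $\bKP\in\cK_\mrm{P}$, whenever $\obKPh\succ\mathbf{0}$.

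Then I would set up the contradiction. Suppose there is a unit vector $\bv\in\mathrm{range}(\hbKP)$ with $(\bHr-\obTh\bHe)\bv=\mathbf{0}$. Because we are in the case $\bHr-\obTh\bHe\neq\mathbf{0}$, there is a unit vector $\bu$ with $(\bHr-\obTh\bHe)\bu\neq\mathbf{0}$, and for $\epsilon>0$ I would form $\hbKP_\epsilon\defeq\hbKP-\epsilon\,\bv\bv^\dagger+\epsilon\,\bu\bu^\dagger$. The next step is to verify $\hbKP_\epsilon\in\cK_\mrm{P}$ for small $\epsilon$: because $\bv$ lies in the range of $\hbKP$, $\hbKP-\epsilon\,\bv\bv^\dagger\succeq\mathbf{0}$ once $\epsilon$ is small enough, adding $\epsilon\,\bu\bu^\dagger\succeq\mathbf{0}$ preserves positive semidefiniteness, and $\tr(\hbKP_\epsilon)=\tr(\hbKP)\le P$ because $\bu$ and $\bv$ are unit vectors.

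The payoff step uses $(\bHr-\obTh\bHe)\bv=\mathbf{0}$, so that the $\bv$-term drops out of $\bGa$, leaving
\begin{equation*}
\bGa(\hbKP_\epsilon)=\bGa(\hbKP)+\epsilon\,(\bHr-\obTh\bHe)\,\bu\bu^\dagger\,(\bHr-\obTh\bHe)^\dagger,
\end{equation*}
a nonzero positive semidefinite perturbation of the positive definite matrix $\bGa(\hbKP)$; hence $\cH(\hbKP_\epsilon)=\log\det\bGa(\hbKP_\epsilon)>\log\det\bGa(\hbKP)=\cH(\hbKP)$, contradicting the optimality of $\hbKP$ in~\eqref{eq:KPMax}. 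Therefore no such $\bv$ exists, which proves the claim.

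The step I expect to be the only real obstacle is the feasibility check for $\hbKP_\epsilon$: one genuinely needs $\bv\in\mathrm{range}(\hbKP)$ in order to subtract power along $\bv$ without leaving the positive semidefinite cone, and one must reinject exactly the removed power along $\bu$ to keep $\tr(\cdot)\le P$; both facts are routine but must be stated with care. An alternative I would keep in reserve bypasses the perturbation entirely by invoking the KKT conditions~\eqref{eq:HmaxKKT}, which by concavity and differentiability of $\cH$ are necessary and sufficient at any maximizer $\hbKP$: there exist $\lambda_0\ge0$ and $\bPs_0\succeq\mathbf{0}$ with $(\bHr-\obTh\bHe)^\dagger\bGa(\hbKP)^{-1}(\bHr-\obTh\bHe)+\bPs_0=\lambda_0\bI$ and $\tr(\bPs_0\hbKP)=0$; the latter forces $\bPs_0\hbKP=\mathbf{0}$, hence every $\bv\in\mathrm{range}(\hbKP)$ lies in $\Null(\bPs_0)$, and if such a $\bv$ also obeyed $(\bHr-\obTh\bHe)\bv=\mathbf{0}$ then $\lambda_0\bv=\mathbf{0}$, forcing $\lambda_0=0$, whence $\bPs_0=-(\bHr-\obTh\bHe)^\dagger\bGa(\hbKP)^{-1}(\bHr-\obTh\bHe)\preceq\mathbf{0}$ and therefore $\bHr-\obTh\bHe=\mathbf{0}$ --- a contradiction.
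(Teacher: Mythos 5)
Your proof is correct, and it takes a genuinely different route from the paper's. The paper proves this claim by recasting~\eqref{eq:KPMax} as a standard MIMO water-filling problem with effective channel $\bH_\mrm{eff}=\bJ^{-1/2}(\bHr-\obTh\bHe)$, then working through the SVD of $\bH_\mrm{eff}$ and the Hadamard inequality to pin down the block structure of $\bB^\dagger\hbKP\bB$, concluding that $\mrm{col}(\bSP)\subseteq\Null^\perp(\bHr-\obTh\bHe)$. You instead observe the affine structure $\bGa(\bKP)=\bGa(\mathbf{0})+(\bHr-\obTh\bHe)\bKP(\bHr-\obTh\bHe)^\dagger$ with $\bGa(\mathbf{0})\succ\mathbf{0}$ (which you correctly justify via $\bGa(\mathbf{0})=[\bI\ -\obTh]\,\obKPh\,[\bI\ -\obTh]^\dagger$ and $\obKPh\succ\mathbf{0}$), and then kill any null-direction $\bv\in\mrm{range}(\hbKP)$ of $\bHr-\obTh\bHe$ by a trace-preserving rank-two power exchange $\hbKP\mapsto\hbKP-\epsilon\bv\bv^\dagger+\epsilon\bu\bu^\dagger$, which strictly increases $\log\det\bGa$ by the matrix-determinant lemma. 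Each step you flag as needing care is indeed fine: $\bv\in\mrm{range}(\hbKP)$ guarantees $\hbKP-\epsilon\bv\bv^\dagger\succeq\mathbf{0}$ for small $\epsilon$, the trace is conserved, and the increase in determinant is strict because $\bGa(\hbKP)\succ\mathbf{0}$ and $(\bHr-\obTh\bHe)\bu\neq\mathbf{0}$. The paper's route explicitly exhibits the optimal eigenstructure of $\hbKP$ (which is of independent interest for understanding the transmit directions), whereas your perturbation argument is shorter, local, and does not require computing the SVD; it establishes only the rank fact that is actually needed. Your backup KKT argument is also correct and is arguably the most economical of all, since it reuses the stationarity conditions~\eqref{eq:HmaxKKT} that the paper already derived for Claim~\ref{claim:HSaddlePoint}, together with the standard fact that $\tr(\bPs_0\hbKP)=0$ with both factors PSD forces $\bPs_0\hbKP=\mathbf{0}$.
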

Note that the claim in Lemma~\ref{lem:OptCovCondn} follows from Claim~\ref{claim:HSaddlePoint} and Claim~\ref{claim:FullRankCase}. It remains to prove Claim~\ref{claim:FullRankCase}.

\begin{proof}
The proof is  based on the so called water-filling
principle~\cite{coverThomas}. From~\eqref{eq:KPMax}, we have
\begin{align}
&\hbKP=\notag\\
&\argmax_{\bKP\in \cK_\mrm{P}}\log\det(\bI\! +
  \!\bJ^{-\frac{1}{2}}(\bHr\!-\!\obTh\bHe)\bKP(\bHr\!-\!\obTh\bHe)^\dagger\bJ^{-\frac{1}{2}}),\label{eq:KpMax2} 
\end{align}
where $\bJ \defeq \bI +
\obTh\obTh^\dagger-\obTh\oPh^\dagger-\oPh\obTh^\dagger \succ {\bf 0}$, 
i.e., $\hbKP$ is an optimal input covariance for a MIMO channel with white noise and matrix
$\bH_\mrm{eff}\defeq \bJ^{-\frac{1}{2}}(\bHr-\obTh\bHe)$. We can now consider the usual water-filling properties associated with $\hbKP$ to establish that $(\bHr-\obTh\bHe)\bSP$ has a full column rank. 

Let $\mrm{rank}(\bH_\mrm{eff})=\nu$ and let us denote the non-zero singular values (in non-increasing order) by $\sigma_1,\sigma_2,\ldots,\sigma_\nu$. Let $\bSi_0 = \diag(\sigma_1,\ldots,\sigma_\nu)$, and
\begin{equation}
\bSi = {\kbordermatrix{& \nu & n_\mrm{t}-\nu \cr
                        \nu & \bSi_0 & \mathbf{0} \\
n_\mrm{r}-\nu  & \mathbf{0} & \mathbf{0}}},
\end{equation}
be such that
\begin{equation}
\bH_\mrm{eff} = \bA \bSi \bB^\dagger = \bA_1 \bSi_0 \bB_1^\dagger, 
\end{equation}
is the singular value decomposition of $\bH_\mrm{eff}$ where $\bA$ and $\bB$ are unitary matrices in $\mathbb{C}^{n_\mrm{r}\times n_\mrm{r}}$ and $\mathbb{C}^{n_\mrm{t}\times n_\mrm{t}}$ and
\begin{equation}
\bA = \kbordermatrix{& \nu & n_\mrm{r}-\nu \cr & \bA_1 & \bA_2},\quad
\bB=\kbordermatrix{& \nu & n_\mrm{t}-\nu \cr & \bB_1 & \bB_2}.
\end{equation}
From~\eqref{eq:KpMax2} we have that
\begin{align}
\hbKP &\in \argmax_{\cK_P}\log\det(\bI + \bH_\mrm{eff}\bKP\bH_\mrm{eff}^\dagger)\nonumber\\
&= \argmax_{\cK_P}\log\det(\bI + \bA\bSi\bB^\dagger \bKP\bB \bSi^\dagger \bA^\dagger)\nonumber\\
&=\argmax_{\cK_P}\log\det(\bI + \bB^\dagger\bKP\bB \bSi^\dagger\bSi)\label{eq:KpBMax}
\end{align}
Since $\bB$ is unitary, we have that $\bB^\dagger\bKP\bB  \in \cK_P$ and hence it follows from~\eqref{eq:KpBMax} that
\begin{align}
\bF \defeq \bB^\dagger\hbKP\bB \in \argmax_{\cK_P}\log\det(\bI + \bKP\bSi^\dagger\bSi).\label{eq:FMatDefn}
\end{align}
We now show that any such $\bF$ is diagonal and $\bF_{ii}=0$ for
$i>\nu$. From the Hadamard inequality~\cite[Section
  16.8]{coverThomas}, we have that 
\begin{equation} 
\log\det(\bI + \bF\bSi^\dagger\bSi)\!\le\!\sum_{i=1}^{n_\mrm{t}}\log(1 +
\bF_{ii}\sigma_{i}^2)\!=\!\sum_{i=1}^{\nu}\log(1 +
\bF_{ii}\sigma_{i}^2),
\label{eq:Hadamard}
\end{equation}
with equality if and only if  the matrix $\bF\bSi^\dagger\bSi$ is a diagonal matrix. We now show that any optimal $\bF$ in~\eqref{eq:FMatDefn} has the form
\begin{equation}
\bF = \kbordermatrix{ & \nu & n_\mrm{t} - \nu \\ \nu & \bF_0 & \mathbf{0} \\ n_\mrm{t}-\nu & \mathbf{0} & \mathbf{0}}\label{eq:Fstruct}
\end{equation}
where $\bF_0$ is a diagonal matrix. Clearly any optimal $\bF$ attains the upper bound in~\eqref{eq:Hadamard}, hence it follows that (1) $\sum_{i=1}^\nu \bF_{ii} = P$, and $\bF_{ii} = 0$ for $i> \nu$ and (2) $\bF \bSi^\dagger\bSi $ is a diagonal matrix. The first condition, together with the fact that $\bF\succeq \mathbf{0}$ imples that the lower diagonal matrix in \eqref{eq:Fstruct} is zero, while the second condition implies that the off-diagonal matrices in \eqref{eq:Fstruct} are zero and that $\bF_0$ is  diagonal.

From~\eqref{eq:FMatDefn}, we have that
\begin{equation}
\hbKP =\bB\bF\bB^\dagger = \bB_1\bF_0\bB_1^\dagger\label{eq:hbKPDecomp}
\end{equation}
and hence for any $\bSP$ that has a full column rank and satisfies~\eqref{eq:FullRankSP}, we have
\begin{align*}
\mrm{col}(\bSP) &\subseteq \mrm{col}(\bB_1)
 =\mrm{Null}^\perp(\bH_\mrm{eff})=\mrm{Null}^\perp(\bHr-\obTh\bHe),
\end{align*}
which implies that $(\bHr-\obTh\bHe)\bSP$ has a full column rank. 
\end{proof}

\subsubsection{ $\obKPh$ is singular}
The case when $\obKPh$ is singular can be handled by considering an
appropriately reduced channel matrix. In this case $\oPh$ has $d\ge 1$
singular values equal to unity and hence we can express its SVD as
\begin{equation}
\oPh = \begin{bmatrix}\bU_1 & \bU_2\end{bmatrix}\begin{bmatrix}\bI & \mathbf{0}\\\mathbf{0} & \bDe\end{bmatrix}\begin{bmatrix}\bV_1^\dagger \\\bV_2^\dagger\end{bmatrix}
\label{eq:oPhSVD}
\end{equation}
where $\sigma_\mrm{max}(\bDe)< 1$.  

First we obtain some conditions that are satisfied when the saddle point noise covariance is singular.
\begin{claim}

Suppose that $(\obKP,\obKPh)$ is a saddle point solution to the minimax problem in~\eqref{eq:capacity} and the singular value decomposition of $\oPh$ is given as in
~\eqref{eq:oPhSVD}. Then we have that \begin{subequations}\begin{equation}
\bU_1^\dagger \rvbzr \stackrel{\mrm{a.s.}}{=}\bV_1^\dagger\rvbze\label{eq:noiseSing1}\end{equation}
\begin{equation}\bU_1^\dagger\bHr, = \bV_1^\dagger\bHe,\label{eq:noiseSing2}\end{equation}
\begin{equation}R_+(\bKP,\obKPh) = I(\rvbx;\bU_2^\dagger\rvbyr\mid \rvbye),\quad \forall~\bKP \in \cK_\mrm{P}.
\label{eq:noiseSing3}\end{equation} \end{subequations}\label{claim:SingularCondns}
\end{claim}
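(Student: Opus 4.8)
The plan is to prove the three identities \eqref{eq:noiseSing1}, \eqref{eq:noiseSing2}, \eqref{eq:noiseSing3} in that order, since each one feeds the next. Identity \eqref{eq:noiseSing1} is a pure second-moment computation and does not even use the saddle-point hypothesis; \eqref{eq:noiseSing2} combines \eqref{eq:noiseSing1} with the structural fact---already invoked for the singular case around \eqref{eq:HGWrel} in Section~\ref{subsec:optNoise}---that $\mrm{col}(\bHt)\subseteq\mrm{col}(\obKPh)$; and \eqref{eq:noiseSing3} follows from \eqref{eq:noiseSing1}, \eqref{eq:noiseSing2}, and the chain rule for mutual information. I expect \eqref{eq:noiseSing2} to be the main obstacle: it is the one place where the algebraic structure of $\oPh$ (its unit singular values) must be tied back to the channel matrices.

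To establish \eqref{eq:noiseSing1} I would read off from \eqref{eq:oPhSVD} the identity $\oPh\bV_1 = \bU_1$ and then compute the covariance of the zero-mean jointly Gaussian vector $\bU_1^\dagger\rvbzr - \bV_1^\dagger\rvbze$. Using $E[\rvbzr\rvbzr^\dagger]=\bI$, $E[\rvbze\rvbze^\dagger]=\bI$, $E[\rvbzr\rvbze^\dagger]=\oPh$ together with $\bU_1^\dagger\bU_1=\bV_1^\dagger\bV_1=\bI$ and $\bU_1^\dagger\oPh\bV_1=\bU_1^\dagger\bU_1=\bI$, the four contributions cancel and the covariance is $\mathbf{0}$; a zero-mean circularly symmetric Gaussian with zero covariance equals $\mathbf{0}$ almost surely, which is \eqref{eq:noiseSing1}.

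For \eqref{eq:noiseSing2} I would first characterize $\mrm{Null}(\obKPh)$: solving $\obKPh[\ba^\dagger\;\bb^\dagger]^\dagger=\mathbf{0}$ forces $\ba=-\oPh\bb$ and $\bb=\oPh^\dagger\oPh\,\bb$, so $\bb$ lies in the eigenvalue-one eigenspace of $\oPh^\dagger\oPh$, which is $\mrm{col}(\bV_1)$ because $\sigma_\mrm{max}(\bDe)<1$; writing $\bb=\bV_1\bc$ gives $\ba=-\bU_1\bc$, hence $\mrm{Null}(\obKPh)=\mrm{col}\bigl(\bigl[\begin{smallmatrix}-\bU_1\\\bV_1\end{smallmatrix}\bigr]\bigr)$. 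Since a saddle point requires $\max_{\bKP\in\cK_\mrm{P}}R_+(\bKP,\obKPh)<\infty$, the argument around \eqref{eq:HGWrel} gives $\mrm{col}(\bHt)\subseteq\mrm{col}(\obKPh)=\mrm{Null}(\obKPh)^\perp$, so $\bHt^\dagger\bigl[\begin{smallmatrix}-\bU_1\\\bV_1\end{smallmatrix}\bigr]=\mathbf{0}$, i.e.\ $\bHr^\dagger\bU_1=\bHe^\dagger\bV_1$; the conjugate transpose is \eqref{eq:noiseSing2}. (A self-contained variant avoids the column-space containment: were \eqref{eq:noiseSing2} false, then by \eqref{eq:noiseSing1} the genie-aided observation $\bU_1^\dagger\rvbyr-\bV_1^\dagger\rvbye=(\bU_1^\dagger\bHr-\bV_1^\dagger\bHe)\rvbx$ would be noiseless, so a rank-one $\bKP$ aligned with a nontrivial right singular vector of $\bU_1^\dagger\bHr-\bV_1^\dagger\bHe$ makes $I(\rvbx;\rvbyr,\rvbye)$ infinite while $I(\rvbx;\rvbye)=\log\det(\bI+\bHe\bKP\bHe^\dagger)$ stays finite, forcing $R_+$ to be infinite and contradicting that $(\obKP,\obKPh)$ is a saddle point.)

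Finally, for \eqref{eq:noiseSing3} I would use that $[\bU_1\;\bU_2]$ is unitary to write $R_+(\bKP,\obKPh)=I(\rvbx;\rvbyr\mid\rvbye)=I(\rvbx;\bU_1^\dagger\rvbyr,\bU_2^\dagger\rvbyr\mid\rvbye)$, and then observe that \eqref{eq:noiseSing1} and \eqref{eq:noiseSing2} give $\bU_1^\dagger\rvbyr=\bU_1^\dagger\bHr\rvbx+\bU_1^\dagger\rvbzr=\bV_1^\dagger\bHe\rvbx+\bV_1^\dagger\rvbze=\bV_1^\dagger\rvbye$ almost surely, for every input distribution. Thus $\bU_1^\dagger\rvbyr$ is a deterministic function of $\rvbye$, so in the chain rule $I(\rvbx;\bU_1^\dagger\rvbyr,\bU_2^\dagger\rvbyr\mid\rvbye)=I(\rvbx;\bU_2^\dagger\rvbyr\mid\rvbye)+I(\rvbx;\bU_1^\dagger\rvbyr\mid\bU_2^\dagger\rvbyr,\rvbye)$ the second term vanishes, leaving $R_+(\bKP,\obKPh)=I(\rvbx;\bU_2^\dagger\rvbyr\mid\rvbye)$ for every $\bKP\in\cK_\mrm{P}$, as claimed. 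Everything here except \eqref{eq:noiseSing2} is routine; the delicacy in \eqref{eq:noiseSing2} is the passage from ``$\obKPh$ singular with these unit singular values'' to a linear identity relating $\bHr$ and $\bHe$, which is exactly where the saddle-point/finiteness hypothesis must be used.
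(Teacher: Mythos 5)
Your proof is correct and, for \eqref{eq:noiseSing1} and \eqref{eq:noiseSing3}, matches the paper's logic (the paper simply notes $\bU_1^\dagger\oPh\bV_1=\bI$ and then writes a single chain of mutual-information identities; you expand these into the explicit covariance computation and the explicit chain-rule argument with $\bU_1^\dagger\rvbyr$ a deterministic function of $\rvbye$). Where you genuinely diverge is \eqref{eq:noiseSing2}. The paper establishes it and \eqref{eq:noiseSing3} simultaneously: it rewrites $R_+(\bKP,\obKPh)=I(\rvbx;\bU_2^\dagger\rvbyr,\,(\bU_1^\dagger\bHr-\bV_1^\dagger\bHe)\rvbx\mid\rvbye)$ using \eqref{eq:noiseSing1}, then observes that finiteness of the saddle value forces the noiseless piece $(\bU_1^\dagger\bHr-\bV_1^\dagger\bHe)\rvbx$ to be identically zero. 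That is exactly your parenthetical ``self-contained variant.'' Your primary route instead derives \eqref{eq:noiseSing2} purely by linear algebra: you characterize $\mrm{Null}(\obKPh)=\mrm{col}\bigl(\bigl[\begin{smallmatrix}-\bU_1\\\bV_1\end{smallmatrix}\bigr]\bigr)$ from the SVD of $\oPh$ with $\sigma_\mrm{max}(\bDe)<1$, and then invoke the already-established containment $\mrm{col}(\bHt)\subseteq\mrm{col}(\obKPh)$ from \eqref{eq:HGWrel} to conclude $\bHt^\dagger$ annihilates that null space, yielding $\bHr^\dagger\bU_1=\bHe^\dagger\bV_1$. This is a clean reuse of a lemma the paper had already proved in the noise-optimality step, and it makes transparent that \eqref{eq:noiseSing2} is precisely the statement ``$\bHt$ has no component in $\mrm{Null}(\obKPh)$''; the paper's version, by contrast, keeps the whole argument information-theoretic and self-contained within the claim. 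Both are valid; yours trades a bit of redundancy (\eqref{eq:HGWrel} itself ultimately rests on the same finiteness observation) for a more modular presentation.
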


\begin{proof}
To establish~\eqref{eq:noiseSing1}, we simply note that
$$E[\bU_1^\dagger\rvbzr\rvbze^\dagger\bV_1]= \bU_1^\dagger\oPh\bV_1= \bI,$$
i.e., the Gaussian random variables $\bU_1^\dagger\rvbzr$ and $\bV_1^\dagger\rvbze$ are perfectly correlated. Next note that
\begin{align*}
R_+(\bKP,\obKPh) &= I(\rvbx;\rvbyr|\rvbye)\\
&=I(\rvbx; \bU_1^\dagger\rvbyr, \bU_2^\dagger\rvbyr|\rvbye)\\
&=I(\rvbx; \bU_2^\dagger\rvbyr, \bU_1^\dagger\rvbyr-\bV_1^\dagger\rvbye|\rvbye)\\
&=I(\rvbx; \bU_2^\dagger\rvbyr, \bU_1^\dagger\bHr\rvbx-\bV_1^\dagger\bHe\rvbx|\rvbye).
\end{align*}Since $\obKPh$ is a saddle point solution, we must have $\max_{\bKP}R_+(\bKP,\obKPh) < \infty$ and hence $\bU_1^\dagger\bHr=\bV_1^\dagger\bHe$,
and  $R_+(\obKP,\obKPh)=I(\rvbx;\bU_2^\dagger\rvbyr\mid\rvbye)$, establishing~\eqref{eq:noiseSing2} and~\eqref{eq:noiseSing3}. 
\end{proof}

Thus with ${\bHh_\mrm{r}}=\bU_2^\dagger\bHr$, and ${\hrvbzr}= \bU_2^\dagger\rvbz_\mrm{r}$ and \begin{equation}{\hrvbyr} = \bU_2^\dagger\rvbyr= {\bHh_\mrm{r}}\rvbx + {\hrvbzr},\label{eq:hatyreq}\end{equation} we have from \eqref{eq:noiseSing3}, that
\begin{align}
\obKP &\in  \argmax_{ \cK_\mrm{P}}I(\rvbx;{\hrvbyr}\mid\rvbye).\label{eq:SingularOpt1}
\end{align}
Since $\bPhh = E[{\hrvbzr}\rvbze^\dagger] \prec \bI$, it follows from~\eqref{eq:SingularOpt1} and Claim~\ref{claim:HSaddlePoint} that 
\begin{equation}
\obKP \in \argmax_{\cK_P}\cHh(\bKP)
\end{equation}
where
\begin{align*}
{\cHh}(\bKP) &= h({\hrvbyr}-\bThh\rvbye), \\
\bThh &=   \bU_2^\dagger(\bHr\obKP\bHe^\dagger + \oPh)(\bI +
  \bHe\obKP\bHe^\dagger)^{-1}.
\end{align*}
Along the lines of Claim~\ref{claim:FullRankCase} we then have that  $$({\bHh_\mrm{r}} - {\bThh} \bHe)\bS = \bU_2^\dagger(\bHr-\obTh\bHe)\bS$$ has a full column rank, which in turn establishes that $(\bHr-\obTh\bHe)\bS$ has a full column rank.

\subsection{Saddle Value}
\label{subsec:SaddleValue}
We use the results from Lemma~\ref{lem:NoiseCondn} and Lemma~\ref{lem:OptCovCondn} to establish~\eqref{eq:SaddlePointProperty}. To invoke Lemma~\ref{lem:OptCovCondn}, we will first assume that the saddle point solution $(\obKP,\obKPh)$ is such that $\bHr-\obTh\bHe\neq \mathbf{0}$ and treat the case  $\bHr-\obTh\bHe= \mathbf{0}$ subsequently. Note that from Lemma~\ref{lem:NoiseCondn} we have that\begin{equation}
(\bHr-\obTh\bHe)\bS \bS^\dagger (\oPh^\dagger\bHr-\bHe)^\dagger = \bf{0},\label{eq:noiseEqn}
\end{equation}
and since $\bM = (\bHr-\obTh\bHe)\bS$ has a full column rank,~\eqref{eq:noiseEqn} reduces to\begin{equation}\oPh^\dagger\bHr\bS=\bHe\bS.\label{eq:noiseEq1}\end{equation}

The difference between the upper and lower bounds is given by
\begin{align}
\bDe R &= R_+(\obKP,\obKPh)-R_-(\obKP)\nonumber\\
&= I(\rvbx;\rvbyr\mid\rvbye)-[I(\rvbx;\rvbyr)-I(\rvbx;\rvbye)]\nonumber\\
&= I(\rvbx;\rvbye\mid\rvbyr)\label{eq:diffR}.
\end{align}

If $\obKPh \succ {\bf 0}$, then $I(\rvbx;\rvbye\mid\rvbyr) = h(\rvbye\mid\rvbyr)-h(\rvbze\mid\rvbzr)
$ and\begin{align}
&h(\rvbye\mid\rvbyr)\nonumber\\
&= \log \det (\bI+\bHe\obKP\bHe^\dagger -  \nonumber\\
&\quad\quad(\bHe\obKP\bHr^\dagger + \oPh^\dagger)(\bHr\obKP\bHr^\dagger  + \bI)^{-1}(\bHr\obKP\bHe^\dagger+\oPh))\nonumber\\
&= \log \det (\bI+\bHe\obKP\bHe^\dagger- \oPh^\dagger(\bHr\obKP\bHr^\dagger + \bI)\oPh)\nonumber\\
&=\log\det(\bI-\oPh^\dagger\oPh)= h(\rvbze\mid\rvbzr)\label{eq:Leq},
\end{align}
where we have used the relation~\eqref{eq:noiseEq1}  in simplifying ~\eqref{eq:Leq}. This shows that the difference $\Delta R$ in
~\eqref{eq:diffR} is zero, thus establishing~\eqref{eq:SaddlePointProperty} whenever
$\obKPh$ is non-singular.

To establish the result when $\obKPh$ is singular, note that from \eqref{eq:noiseSing1} and~\eqref{eq:noiseSing2} in Claim~\ref{claim:SingularCondns},
\begin{align}
\Delta R &= I(\rvbx;\rvbye\mid\rvbyr),\nonumber\\
&= I(\rvbx;\bV_2^\dagger\rvbye\mid \rvbyr),\label{eq:diffRSingular}
\end{align} 
which is zero as shown below.
\begin{align}
&h(\bV_2^\dagger\rvbye\mid\rvbyr) \nonumber\\
&=\log\det(\bI + \bV_2^\dagger\bHe\obKP\bHe^\dagger\bV_2 -
  (\bV_2^\dagger \bHe\obKP\bHr^\dagger + \bDe^\dagger\bU_2^\dagger) \nonumber \\
& \qquad\qquad(\bI +\bHr\obKP\bHr^\dagger)^{-1}(\bHr\obKP\bHe^\dagger\bV_2 + \bU_2\bDe ))\label{eq:Leq1}\\
&=\log\det(\bI + \bDe^\dagger \bU_2^\dagger\bHr\obKP\bHr^\dagger
  \bU_2\bDe \nonumber\\ 
&\qquad\qquad-\bDe^\dagger\bU_2^\dagger(\bI + \bHr\obKP\bHr^\dagger)\bU_2\bDe)\nonumber\\
&=\log\det(\bI-\bDe^\dagger\bDe)\nonumber\\
&=h(\bV_2^\dagger\rvbze \mid \bU_2^\dagger
\rvbzr)=h(\bV_2^\dagger\rvbze\mid\rvbzr)\label{eq:Leq2},
\end{align}
where
we have used from~\eqref{eq:noiseEq1} that 
\begin{equation*}
\bV_2^\dagger\oPh^\dagger \bHr \bS = \bV_2^\dagger\bHe\bS \Rightarrow \bDe^\dagger\bU_2^\dagger\bHr \bS = \bV_2^\dagger\bHe\bS, 
\end{equation*}in simplifying~\eqref{eq:Leq1} and the equality in~\eqref{eq:Leq2} follows from the fact that $\bU_1^\dagger \rvbzr$ is independent of $(\bU_2^\dagger\rvbzr,\bV_2^\dagger\rvbze)$. This establishes~\eqref{eq:SaddlePointProperty} when $\obKPh$ is singular.

It remains to consider the case when the saddle point solution $(\obKP,\obKPh)$ is such that\begin{equation}
\obTh\bHe = \bHr.\label{eq:zeroCondn}
\end{equation}
In this case, we show that the saddle value and hence the capacity is zero. From~\eqref{eq:obTh-def}, $\obTh = (\oPh + \bHr\obKP\bHe^\dagger)(\bI + \bHe\obKP\bHe^\dagger)^{-1}$, hence we have
\begin{equation}
\obTh + \obTh\bHe\obKP\bHe^\dagger = \oPh + \bHr\obKP\bHe^\dagger.
\label{eq:zeroCondn2}
\end{equation}
Substituting~\eqref{eq:zeroCondn} in~\eqref{eq:zeroCondn2}, we have that $\oPh=\obTh$, and using this relation it can be verified that $R_+(\obKP,\obKPh)=\mathbf{0}$.
This completes the proof of Theorem~\ref{thm1}.

\section{Zero-Capacity Condition and Scaling Laws}

The conditions on $\bHr$ and $\bHe$  for which the secrecy capacity is zero have a simple form.
\begin{lemma}
The secrecy capacity of the MIMOME channel is zero if and only if
\begin{equation}
\sigma_\mrm{max}(\bHr,\bHe) \defeq \sup_{\bv\in\mathbb{C}^{\Nt}}\frac{||\bHr\bv||}{||\bHe\bv||} \le 1.
\label{eq:largeEigValue}
\end{equation}
\label{lem:zeroCapCond}
\end{lemma}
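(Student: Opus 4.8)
The plan is to treat the two implications of Lemma~\ref{lem:zeroCapCond} separately, and in both directions to work with the rate $R_-(\bKP)=\log\frac{\det(\bI+\bHr\bKP\bHr^\dagger)}{\det(\bI+\bHe\bKP\bHe^\dagger)}$ that already appears in Theorem~\ref{thm1}. The first observation I would record is that the hypothesis~\eqref{eq:largeEigValue}, $\sigma_\mrm{max}(\bHr,\bHe)\le1$, is equivalent to the semidefinite ordering $\bHr^\dagger\bHr\preceq\bHe^\dagger\bHe$, because $\|\bHr\bv\|^2=\bv^\dagger\bHr^\dagger\bHr\bv$ and $\|\bHe\bv\|^2=\bv^\dagger\bHe^\dagger\bHe\bv$ for every $\bv$.

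For the ``if'' direction, suppose $\bHr^\dagger\bHr\preceq\bHe^\dagger\bHe$. Writing $\bKP=\bS\bS^\dagger$ and using Sylvester's determinant identity, $\det(\bI+\bHr\bKP\bHr^\dagger)=\det(\bI+\bS^\dagger\bHr^\dagger\bHr\bS)$, and likewise for $\bHe$; since conjugation by $\bS$ preserves the order, $\bS^\dagger\bHr^\dagger\bHr\bS\preceq\bS^\dagger\bHe^\dagger\bHe\bS$, and monotonicity of $\det(\bI+\cdot)$ on positive semidefinite matrices gives $R_-(\bKP)\le0$ for every $\bKP\in\cK_\mrm{P}$. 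By Theorem~\ref{thm1} the secrecy capacity equals $R_-(\obKP)$ for the saddle-point input covariance $\obKP\in\cK_\mrm{P}$, so $C\le0$, and since the zero rate is always achievable, $C=0$. (If one prefers to avoid the saddle-value half of Theorem~\ref{thm1}, an equivalent argument uses only the upper bound $C\le\min_{\bK_\bPh}\max_{\bKP}R_+(\bKP,\bK_\bPh)$: from $\|\bHr\bv\|\le\|\bHe\bv\|$ one has $\Null(\bHe)\subseteq\Null(\bHr)$, so $\bHr=\bT\bHe$ for some $\bT$, and choosing $\bT=\bHr\bHe^{+}$ makes $\sigma_\mrm{max}(\bT)\le1$; taking the noise cross-covariance $\bPh=\bT$ realizes $\rvbzr=\bT\rvbze+\rvbn$ with $\rvbn$ independent of $(\rvbx,\rvbze)$, hence $\rvbyr=\bT\rvbye+\rvbn$ and $\rvbx\to\rvbye\to\rvbyr$ is a Markov chain, so $R_+(\bKP,\bK_\bPh)=I(\rvbx;\rvbyr\mid\rvbye)=0$ for all $\bKP$ and therefore $C=0$.)

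For the ``only if'' direction I would prove the contrapositive. If $\sigma_\mrm{max}(\bHr,\bHe)>1$, fix $\bv\ne\mathbf{0}$ with $\|\bHr\bv\|>\|\bHe\bv\|$ and take the rank-one covariance $\bKP=(P/\|\bv\|^2)\,\bv\bv^\dagger\in\cK_\mrm{P}$. Then $\bHr\bKP\bHr^\dagger$ and $\bHe\bKP\bHe^\dagger$ are both rank one, so $R_-(\bKP)=\log\frac{1+(P/\|\bv\|^2)\|\bHr\bv\|^2}{1+(P/\|\bv\|^2)\|\bHe\bv\|^2}>0$. Because $R_-(\bKP)$ is the secrecy rate delivered by the Csisz{\'a}r--K{\"o}rner scheme~\eqref{eq:CK} with $\rvu=\rvx\sim\CN(\mathbf{0},\bKP)$, this gives $C\ge R_-(\bKP)>0$, which is the contrapositive of the claim.

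I do not expect a genuine obstacle: once the hypothesis is rewritten as $\bHr^\dagger\bHr\preceq\bHe^\dagger\bHe$, each direction is only a few lines. The one point deserving a sentence of care is why, in the ``if'' direction, forcing $R_-(\bKP)\le0$ for \emph{Gaussian} inputs is enough to conclude $C=0$ --- a Gaussian $\rvu=\rvx$ by itself only certifies the Csisz{\'a}r--K{\"o}rner lower bound --- and the answer is precisely that Theorem~\ref{thm1} has already collapsed the Csisz{\'a}r--K{\"o}rner maximum onto $R_-(\obKP)$ (or, in the self-contained variant, that the least-favorable-noise choice $\bPh=\bT$ pins the minimax value at $0$).
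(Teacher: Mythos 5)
The paper does not prove Lemma~\ref{lem:zeroCapCond}: immediately after the statement it says ``We omit the proof of this condition due to space constraints.'' There is therefore nothing to compare against, and the only question is whether your argument is correct. It is.

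Your reformulation $\sigma_\mrm{max}(\bHr,\bHe)\le1 \iff \bHr^\dagger\bHr\preceq\bHe^\dagger\bHe$ is right (noting that the supremum being $\le 1$ forces $\Null(\bHe)\subseteq\Null(\bHr)$, which absorbs the case $\bHe\bv=\mathbf{0}$). The ``only if'' direction via a rank-one $\bKP$ is the standard achievability-witness argument and is airtight. For the ``if'' direction you give two routes, and both are sound. The first simply reads off $R_-(\bKP)\le 0$ for every $\bKP\in\cK_\mrm{P}$ from Sylvester's identity and monotonicity of $\det(\bI+\cdot)$ on the positive-semidefinite cone, and then invokes the saddle-value identity $C=R_-(\obKP)$ from Theorem~\ref{thm1}. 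The second route is more self-contained and arguably more in the spirit of the machinery already built in the paper: you construct a least-favorable noise $\bPh=\bT=\bHr\bHe^{+}$ so that $\rvbx\to\rvbye\to\rvbyr$ is Markov and the minimax upper bound collapses to $0$ directly, bypassing the saddle-value half of Theorem~\ref{thm1}. The one statement you make without proof --- that $\sigma_\mrm{max}(\bHr\bHe^{+})\le1$ follows from $\bHr^\dagger\bHr\preceq\bHe^\dagger\bHe$ --- is true and deserves a line: for $\bu\in\mathbb{C}^{\Ne}$ write $\bu=\bu_1+\bu_2$ with $\bu_1\in\mrm{col}(\bHe)$ and $\bu_2\perp\mrm{col}(\bHe)$; then $\bHe^{+}\bu_2=\mathbf{0}$, and with $\bv=\bHe^{+}\bu_1$ one has $\bHe\bv=\bu_1$, so $\|\bHr\bHe^{+}\bu\|=\|\bHr\bv\|\le\|\bHe\bv\|=\|\bu_1\|\le\|\bu\|$. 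With that filled in, the ``alternative'' version is fully self-contained modulo only the upper-bound half of Theorem~\ref{thm1}. In short: correct, and a reasonable reconstruction of the argument the authors chose to omit.
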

We omit the proof of this condition due to space constraints. The quantity $\sigma_\mrm{max}(\bHr,\bHe)$ is the largest generalized singular value of the channel matrices \cite{golubVanLoan}.
Analysis of the zero-capacity condition in the limit of large number of antennas provides several useful insights we develop below.

For our analysis, we use the following convergence property of the largest generalized singular value for Gaussian matrices.

\begin{fact}[~\cite{silverStein85,BaiSilverstein:95}]
Suppose that $\bHr$ and $\bHe$ have i.i.d. $\CN(0,1)$ entries. Let
$n_\mrm{r},n_\mrm{e},n_\mrm{t}\rightarrow \infty$, while keeping
$n_\mrm{r}/n_\mrm{e}=\g$  and $n_\mrm{t}/n_\mrm{e}=\beta$ fixed. If
$\beta < 1$, then the largest generalized singular value of
$(\rvbH_\mrm{r},\rvbH_\mrm{e})$ converges almost surely to 
\begin{equation}\sigma_\mrm{max}(\rvbH_\mrm{r},\rvbH_\mrm{e})\stackrel{\mrm{a.s.}}{\rightarrow} {\g}\left[\frac{1 + \sqrt{1-(1-\beta)\left(1-\frac{\beta}{\g}\right)}}{1-\beta }\right]^2.\label{eq:f1eq}\end{equation}
\label{fact:LargestGSVConv}
\end{fact}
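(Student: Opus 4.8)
The plan is to reduce \eqref{eq:f1eq} to a statement about the right edge of the limiting spectral distribution (LSD) of a ratio of independent Wishart matrices --- the multivariate-$F$ (Wachter) ensemble --- and then to upgrade ``right edge of the bulk'' to ``almost-sure limit of the largest eigenvalue'' by invoking a no-eigenvalue-outside-the-support theorem, which is exactly what the cited works furnish.

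First I would turn the quantity of interest into an extreme eigenvalue. Since $\beta<1$ we have $\Ne>\Nt$ for all large dimensions, and by the Bai--Yin law $\lambda_{\min}\!\big(\tfrac{1}{\Ne}\bHe^\dagger\bHe\big)\to(1-\sqrt{\beta})^2>0$ almost surely, so $\bHe^\dagger\bHe$ is a.s.\ invertible with an a.s.\ bounded inverse. The substitution $\bv=(\bHe^\dagger\bHe)^{-1/2}\bu$ then gives
\[
\sigma_\mrm{max}(\bHr,\bHe)^2=\lambda_{\max}\!\big(\bHr^\dagger\bHr\,(\bHe^\dagger\bHe)^{-1}\big)=\gamma\,\lambda_{\max}(\bA\bB^{-1}),
\]
where $\bA=\tfrac{1}{\Nr}\bHr^\dagger\bHr$ and $\bB=\tfrac{1}{\Ne}\bHe^\dagger\bHe$ are independent normalized complex Wishart matrices whose empirical spectral distributions converge to Marchenko--Pastur laws of aspect ratios $\Nt/\Nr=\beta/\gamma$ and $\Nt/\Ne=\beta$, respectively.

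Next I would identify the LSD of $\bA\bB^{-1}$ and locate its right edge. Since $\bA$ and $\bB$ have unitarily invariant distributions they are asymptotically free, so the LSD of $\bA\bB^{-1}$ is the free multiplicative convolution of $\mathrm{MP}_{\beta/\gamma}$ with the pushforward of $\mathrm{MP}_{\beta}$ under $x\mapsto 1/x$; equivalently the eigenvalues of $\bB^{-1/2}\bA\bB^{-1/2}$ are, under the map $\mu\mapsto\mu/(1+\mu)$, those of the classical MANOVA/Jacobi matrix, whose LSD is the Wachter distribution, supported on a compact interval $[\ell,u]$ with endpoints rational in $(\beta,\gamma)$. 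I would extract $u$ either as a branch point of the Stieltjes transform or from the fact that the $S$-transform of a Marchenko--Pastur law is explicit, so that the $S$-transform of the product is a product of two simple rational functions; carrying out this algebra, multiplying by $\gamma$, and taking a square root should, after simplification, reproduce the right-hand side of \eqref{eq:f1eq}. Finally, the LSD only controls the bulk, so to obtain the claimed almost-sure \emph{convergence} I would invoke the exact-separation / no-eigenvalue-outside-the-support results for multivariate-$F$ matrices in \cite{silverStein85,BaiSilverstein:95}: almost surely, for every $\varepsilon>0$ no eigenvalue of $\bB^{-1/2}\bA\bB^{-1/2}$ eventually lies beyond $u+\varepsilon$, while weak convergence of the empirical spectral distribution already forces eigenvalues to accumulate at $u$; hence $\lambda_{\max}(\bA\bB^{-1})\to u$ almost surely, and the displayed identity finishes the argument.

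The main obstacle is this last ingredient together with the edge computation. The no-outliers statement is the genuinely deep part --- it is precisely why \eqref{eq:f1eq} is quoted from \cite{silverStein85,BaiSilverstein:95} rather than re-derived here --- and the edge extraction, though routine, is delicate: one must keep careful track of the two distinct aspect ratios $\beta$ and $\beta/\gamma$, and of whether one is manipulating a singular value or its square. The hypothesis $\beta<1$ is used in two ways: it makes $\bHe^\dagger\bHe$ invertible with smallest eigenvalue bounded away from zero, and it keeps the right edge $u$ finite, consistently with the divergence of the right-hand side of \eqref{eq:f1eq} as $\beta\uparrow 1$.
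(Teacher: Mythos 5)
This statement is a \emph{Fact} that the paper cites from \cite{silverStein85,BaiSilverstein:95} without supplying its own derivation, so there is no in-paper proof to compare against; your task here is really to assess whether your sketch plausibly reconstructs what the cited works establish. At that level your plan is sound and follows the standard random-matrix route: reduce the generalized singular value to the top eigenvalue of a Fisher ($F$-) matrix, identify the Wachter limiting spectral law, extract its right edge, and upgrade edge-of-bulk to an almost-sure limit of $\lambda_{\max}$ via a no-eigenvalues-outside-the-support theorem. Your reduction $\sigma_\mrm{max}^2=\lambda_{\max}\!\big(\bHr^\dagger\bHr(\bHe^\dagger\bHe)^{-1}\big)=\gamma\,\lambda_{\max}(\bA\bB^{-1})$ and the aspect ratios $y_1=\beta/\gamma$, $y_2=\beta$ are both correct, and the right edge $b=\big(1+\sqrt{y_1+y_2-y_1y_2}\,\big)^2/(1-y_2)^2$ indeed rearranges to the bracketed term in \eqref{eq:f1eq}.

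Two things to be careful about. First, the very last step where you say ``multiplying by $\gamma$ and taking a square root should reproduce the right-hand side of \eqref{eq:f1eq}'' is not literally consistent: your chain gives $\sigma_\mrm{max}^2\to\gamma b$, so the almost-sure limit of $\sigma_\mrm{max}$ as defined in Lemma~\ref{lem:zeroCapCond} (a ratio of \emph{norms}, not squared norms) would be $\sqrt{\gamma b}$, whereas \eqref{eq:f1eq} states $\gamma b$. In other words, the displayed limit is the limit of $\sigma_\mrm{max}^2$ (equivalently, the paper is tacitly using the ``generalized eigenvalue'' convention for $\sigma_\mrm{max}$ in the Fact while Lemma~\ref{lem:zeroCapCond} uses the ratio-of-norms convention). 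You should state this explicitly rather than asserting that a square root ``should'' match --- it does not, and spotting the discrepancy is part of verifying the statement. Note that this does not affect Corollary~\ref{corol:AsympZeroCondn}, since the threshold $\sigma_\mrm{max}\le1$ and $\sigma_\mrm{max}^2\le1$ carve out the same region. Second, your identification of the LSD via asymptotic freeness and $S$-transforms is a legitimate shortcut but is not the route taken in the cited Silverstein/Bai--Silverstein works, which operate entirely through Stieltjes-transform (Marchenko--Pastur-type) equations and prove the no-outliers statement by a careful interval argument on the real axis; if one wants the a.s.\ convergence for the $F$-matrix from those references one should also spell out the conditioning on $\bHe$ (so that $(\bHe^\dagger\bHe)^{-1}$ plays the role of the deterministic population covariance in their theorem), using the Bai--Yin bound you already invoked to guarantee that this conditional covariance has spectrum in a fixed compact interval bounded away from zero when $\beta<1$.
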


By combining Lemma~\ref{lem:zeroCapCond} and Fact~\ref{fact:LargestGSVConv}, one can deduce the following condition for the zero-capacity condition.

\begin{corol}
Suppose that $\rvbH_\mrm{r}$ and $\rvbH_\mrm{e}$ have i.i.d.~$CN(0,1)$
entries. Suppose that $n_\mrm{r},n_\mrm{e},n_\mrm{t} \rightarrow
\infty$, while keeping $n_\mrm{r}/n_\mrm{e}=\g$  and
$n_\mrm{t}/n_\mrm{e}=\beta$ fixed. The secrecy
capacity\footnote{We assume that the channels are sampled once, then
  stay fixed for the entire  period of transmission, and are revealed
  to all the terminals.} $C(\rvbH_\mrm{r},\rvbH_\mrm{e})$ converges
almost surely to zero if and only if $ 0 \le \beta \le 1/2$, $0 \le \g
\le 1$, and 
\begin{equation}
\g \le (1-\sqrt{2\beta})^2.\label{eq:AsympZeroCondn}
\end{equation}\label{corol:AsympZeroCondn}
\end{corol}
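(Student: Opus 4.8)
The plan is to reduce everything to the almost-sure behaviour of the largest generalized singular value $\sigma_\mrm{max}(\rvbH_\mrm{r},\rvbH_\mrm{e})$ through the exact zero-capacity characterization of Lemma~\ref{lem:zeroCapCond}, and then feed in the convergence supplied by Fact~\ref{fact:LargestGSVConv}. Write $g(\beta,\gamma)$ for the right-hand side of~\eqref{eq:f1eq}. The first step is purely algebraic: for $\beta<1$,
\[
g(\beta,\gamma)\le1
\;\Longleftrightarrow\;
\sqrt{\gamma}+\sqrt{2\beta}\le1
\;\Longleftrightarrow\;
\big(\,\beta\le\tfrac12,\ \gamma\le1,\ \gamma\le(1-\sqrt{2\beta})^2\,\big).
\]
To see this, take square roots in $g(\beta,\gamma)\le1$ (legitimate since $1-\beta>0$), move $\sqrt{\gamma}$ to the right, record the sign constraint $\sqrt{\gamma}\le1-\beta$ needed before squaring, square, and simplify; the polynomial inequality collapses to $(1-\beta)\big[(\sqrt{\gamma}-1)^2-2\beta\big]\ge0$, which combined with $\sqrt{\gamma}\le1-\beta\le1$ is exactly $1-\sqrt{\gamma}\ge\sqrt{2\beta}$. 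Conversely $\sqrt{\gamma}+\sqrt{2\beta}\le1$ forces $\beta\le1/2$, $\gamma\le1$ and $\sqrt{\gamma}\le1-\sqrt{2\beta}\le1-\beta$, so every squaring step reverses. I expect this to be routine, the only care being the direction of the inequalities when squaring — which is precisely where the side conditions $\beta\le1/2$ and $\gamma\le1$ come from.

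For the ``if'' direction, assume $\sqrt{\gamma}+\sqrt{2\beta}\le1$; then $\beta<1$, so Fact~\ref{fact:LargestGSVConv} gives $\sigma_\mrm{max}(\rvbH_\mrm{r},\rvbH_\mrm{e})\stackrel{\mrm{a.s.}}{\to}g(\beta,\gamma)\le1$. When the inequality is strict, $g(\beta,\gamma)<1$, so almost surely $\sigma_\mrm{max}(\rvbH_\mrm{r},\rvbH_\mrm{e})<1$ for all large $n$, whence $C(\rvbH_\mrm{r},\rvbH_\mrm{e})=0$ eventually by Lemma~\ref{lem:zeroCapCond}, and in particular $C\stackrel{\mrm{a.s.}}{\to}0$.

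For the ``only if'' direction I argue the contrapositive: $\sqrt{\gamma}+\sqrt{2\beta}>1$ implies $C$ does not converge almost surely to $0$. This is where the real work lies, since convergence of $\sigma_\mrm{max}$ to a limit exceeding $1$ must be upgraded into a statement about $C$, and Lemma~\ref{lem:zeroCapCond} only gives $C>0$, not that $C$ stays bounded away from $0$. The tool is the rank-one (beamforming) lower bound: plugging $\bKP=P\bv\bv^\dagger\in\cK_\mrm{P}$ into the capacity formula $C=\max_{\bKP\in\cK_\mrm{P}}\log\frac{\det(\bI+\bHr\bKP\bHr^\dagger)}{\det(\bI+\bHe\bKP\bHe^\dagger)}$ of Theorem~\ref{thm1} gives $C\ge\log\frac{1+P\|\bHr\bv\|^2}{1+P\|\bHe\bv\|^2}$ for every $\bv$ with $\|\bv\|=1$. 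If $\beta<1$, take $\bv=\bv_n$ to be the maximizing generalized singular vector and set $b_n=\|\bHe\bv_n\|^2$, so $\|\bHr\bv_n\|^2=\sigma_\mrm{max}^2\,b_n$; since $b_n\ge\lambda_\mrm{min}(\bHe^\dagger\bHe)\sim n_\mrm{e}(1-\sqrt{\beta})^2\to\infty$ and $\sigma_\mrm{max}^2\to g(\beta,\gamma)^2$ by Fact~\ref{fact:LargestGSVConv}, we get $C\ge\log\frac{1+P\sigma_\mrm{max}^2 b_n}{1+Pb_n}\to\log g(\beta,\gamma)^2>0$, so $\liminf_n C>0$ almost surely. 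If $\beta\ge1$, Fact~\ref{fact:LargestGSVConv} does not apply, but $\bHe$ then has a null space of dimension $n_\mrm{t}-n_\mrm{e}$ (or, at $\beta=1$, a smallest singular value tending to $0$), independent of $\bHr$; transmitting along a best direction in this (near-)null space leaks essentially nothing while $\|\bHr\bv_n\|^2\to\infty$, so $C\to\infty$. Either way $C$ does not converge almost surely to $0$, which together with the ``if'' direction proves the equivalence off the curve $\sqrt{\gamma}+\sqrt{2\beta}=1$.

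The remaining delicate point — and the one I expect to be the main obstacle — is the boundary $\sqrt{\gamma}+\sqrt{2\beta}=1$, where $\sigma_\mrm{max}(\rvbH_\mrm{r},\rvbH_\mrm{e})\to1$ but oscillates on both sides of $1$, so $C$ is strictly positive infinitely often and the clean dichotomy above does not apply. To still obtain $C\stackrel{\mrm{a.s.}}{\to}0$ I would diagonalize $C=\max_{\bKP\in\cK_\mrm{P}}\log\frac{\det(\bI+\bHr\bKP\bHr^\dagger)}{\det(\bI+\bHe\bKP\bHe^\dagger)}$ in the generalized singular basis: only directions whose generalized singular value exceeds $1$ contribute, and each contributes at most $\log\sigma_\mrm{max}^2=o(1)$, so it suffices to show that almost surely only $O(1)$ generalized singular values exceed $1$ near the spectral edge. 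That last statement is an edge-rigidity property of the generalized eigenvalue distribution of Gaussian matrices and is where I would invoke the relevant random-matrix edge-universality results; the rest of the boundary argument is then elementary.
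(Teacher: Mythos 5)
Your approach coincides with the paper's intended one (the paper supplies no explicit proof of the corollary, saying only that it follows ``by combining'' Lemma~\ref{lem:zeroCapCond} and Fact~\ref{fact:LargestGSVConv}), but you fill in two nontrivial pieces that the terse statement glosses over, and you correctly flag the one genuine difficulty. The algebra checks out: writing $\sqrt{g(\beta,\gamma)}=\bigl(\sqrt{\gamma}+\sqrt{\beta(1+\gamma-\beta)}\bigr)/(1-\beta)$ and imposing $g\le 1$ with the sign constraint $\sqrt{\gamma}\le 1-\beta$ does collapse to $(1-\beta)\bigl[(\sqrt{\gamma}-1)^2-2\beta\bigr]\ge 0$, hence to $\sqrt{\gamma}+\sqrt{2\beta}\le1$, which is equivalent to the three stated constraints. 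Your rank-one beamforming lower bound $C\ge\log\bigl((1+P\|\bHr\bv\|^2)/(1+P\|\bHe\bv\|^2)\bigr)$ is exactly what is needed to upgrade ``$\sigma_\mrm{max}>1$ eventually'' to ``$\liminf C>0$ a.s.'': without it, Lemma~\ref{lem:zeroCapCond} alone only yields $C>0$, which is compatible with $C\to 0$. The $\beta>1$ case via the null space of $\bHe$, and the $\beta=1$ case via $\lambda_\mrm{min}(\bHe^\dagger\bHe)\to 0$ with independence of the minimizing singular vector from $\bHr$, are also sound.

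The one real gap, which you yourself identify, is the boundary curve $\gamma=(1-\sqrt{2\beta})^2$. There $\sigma_\mrm{max}\to 1$, and neither direction of the simple argument closes: the ``$\sigma_\mrm{max}<1$ eventually'' step fails, and the beamforming lower bound degenerates to $\log g^2=0$. Your proposed fix — bound $C$ by the high-SNR quantity $\sum_{s_i>1}\log s_i^2$ and show that near the edge only $O(1)$ generalized singular values exceed $1$, each by $o(1)$ — is the right shape, but it does lean on edge-rigidity/Tracy--Widom-type control of the generalized singular value distribution that is not in the cited references \cite{silverStein85,BaiSilverstein:95}, which give only the a.s.\ limit of $\sigma_\mrm{max}$. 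Without that external input, the ``if'' direction is established only for strict inequality $\gamma<(1-\sqrt{2\beta})^2$, and the corollary as stated (with $\le$) is not fully proved. This is an honest gap in the corollary as written (the paper itself does not prove it), so you should either restrict to the open region or cite a specific edge-universality result for the generalized eigenvalues of independent Wishart pairs.
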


\begin{figure*}
\begin{minipage}[b]{0.5\linewidth}
\includegraphics[scale=0.35]{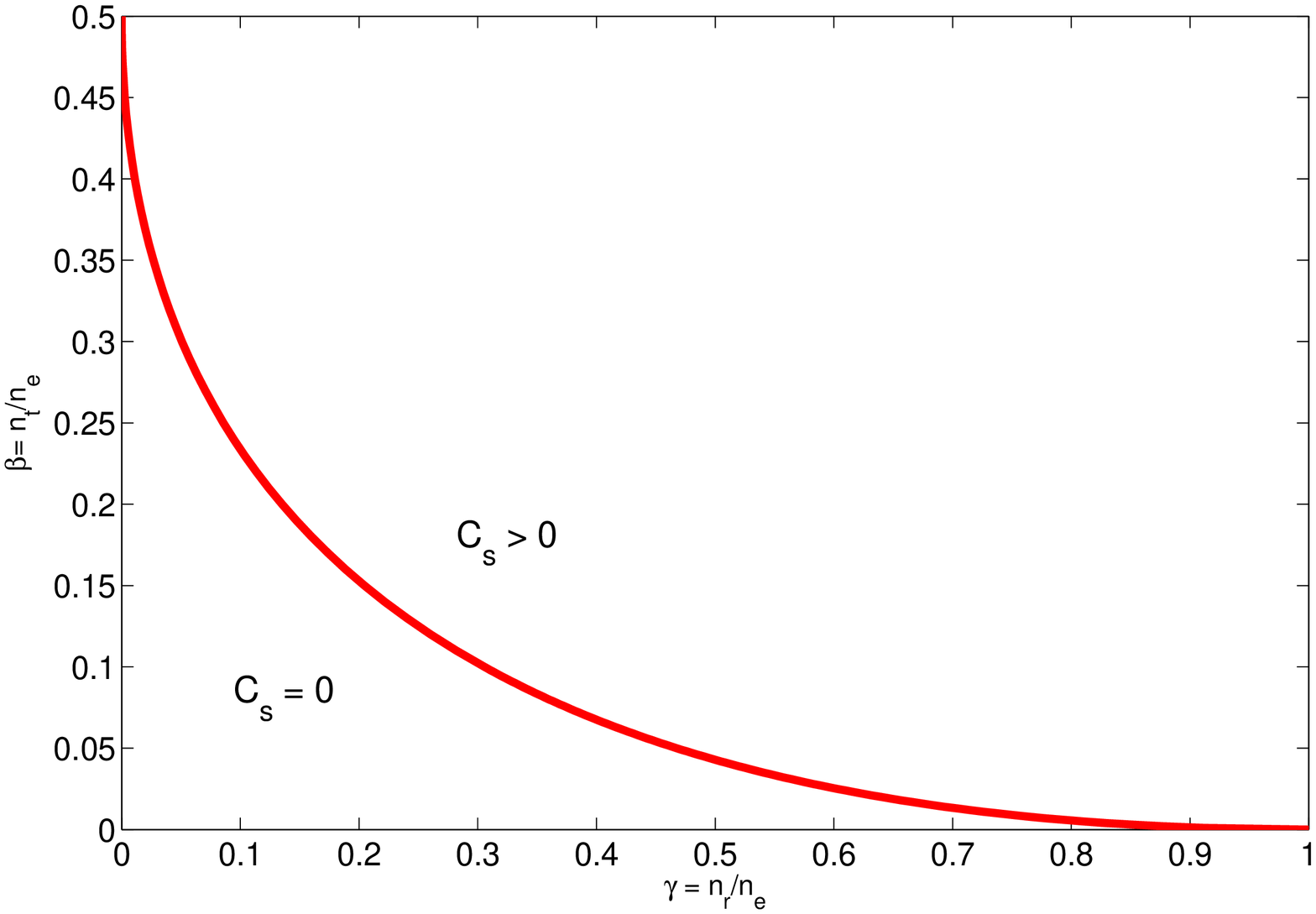}
\caption{Zero-capacity condition in the $(\beta,\gamma)$ plane.  The capacity is zero for any point below the curve, i.e., the eavesdropper has sufficiently many antennas to get non-vanishing fraction of the message, even when the sender and receiver fully exploit the knowledge of $\rvbH_\mrm{e}$. }
\label{fig:zeroPlot1}
\end{minipage}\textcolor{white}{|}
\begin{minipage}[b]{0.5\linewidth}
\includegraphics[scale=0.35]{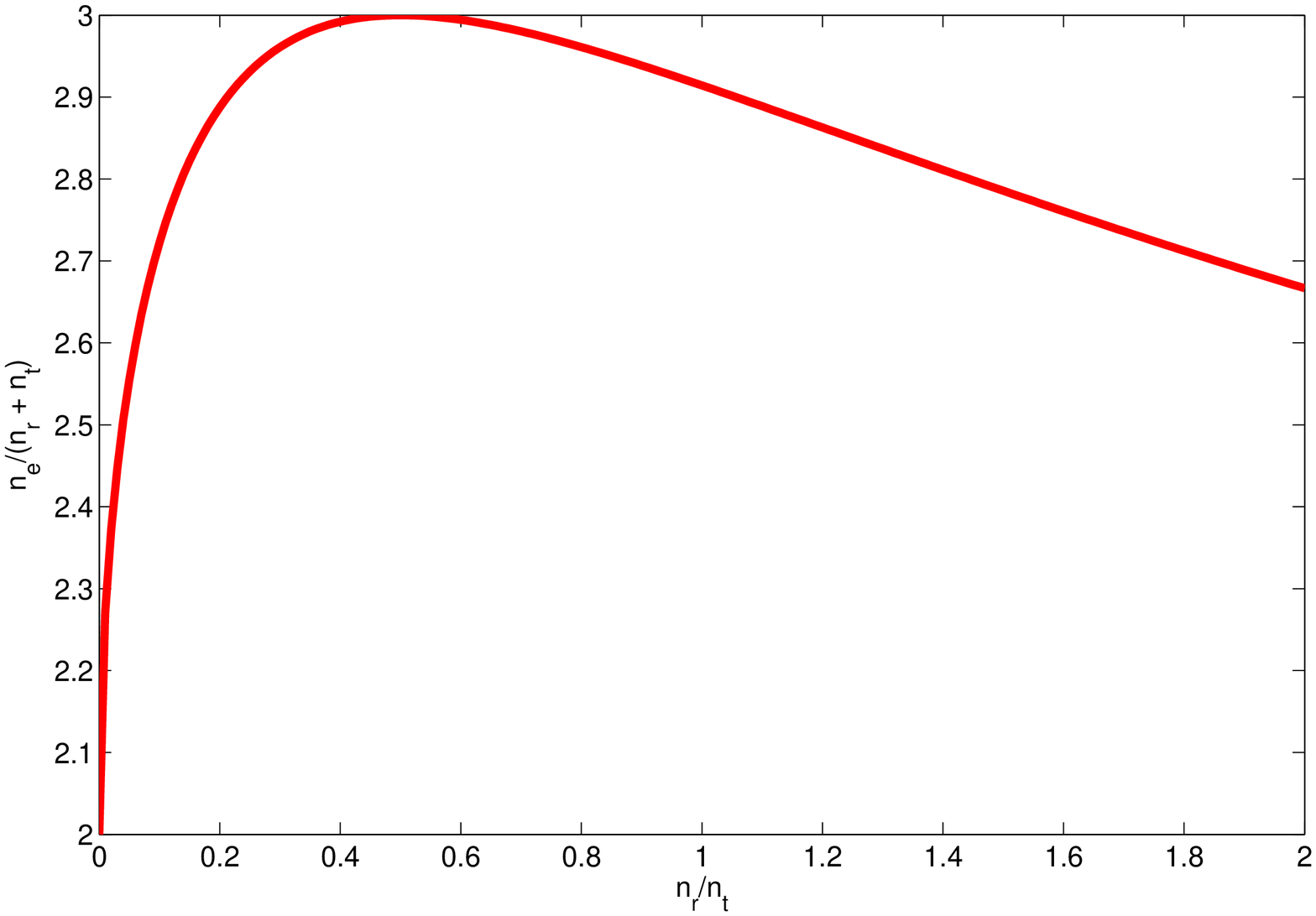}
\caption{The minimum number of eavesdropping antennas per sender plus receiver antenna for the secrecy capacity to be zero, plotted as a function of $n_\mrm{r}/n_\mrm{t}$.  }
\label{fig:zeroPlot2}
\end{minipage}
\end{figure*}

Figs.~\ref{fig:zeroPlot1} and~\ref{fig:zeroPlot2} provide further
insight into the asymptotic analysis for the capacity achieving
scheme. In Fig.~\ref{fig:zeroPlot1}, we show the values of
$(\beta,\gamma)$ where the secrecy rate is zero. If the eavesdropper
increases its antennas at a sufficiently high rate so that the point
$(\beta,\gamma)$ lies below the solid curve, then secrecy capacity is
zero.  The MISOME case corresponds to the vertical intercept of this
plot. The secrecy capacity is zero, if $\beta \le 1/2$, i.e., the
eavesdropper has at least twice the number of antennas as the sender.
The single transmit antenna (SIMOME) case corresponds to the
horizontal intercept. In this case the secrecy capacity is zero if $\g
\le 1$, i.e., the eavesdropper has more antennas than the receiver.

In Fig.~\ref{fig:zeroPlot2}, we consider the scenario where a total
of $T\gg 1$ antennas are divided between the sender and the
receiver. The horizontal axis plots the ratio $n_\mrm{r}/n_\mrm{t}$,
while the vertical axis plots the minimum number of antennas at the
eavesdropper (normalized by $T$) for the secrecy capacity to be
zero. We note that the optimal allocation of antennas, that maximizes
the number of eavesdropper antennas happens at $n_\mrm{r}/n_\mrm{t}
=1/2$. This can be explicitly obtained from the following minimization
\begin{equation}\begin{aligned} 
&\mrm{minimize}~~\bt + \g\\
&\mrm{ subject~to},~~\g \ge (1-\sqrt{2\bt})^2,~~\bt \ge 0,~~\g \ge 0.
\end{aligned}\label{eq:myOpt}\end{equation}

The optimal solution can be easily verified to be $(\beta^*,\g^*)= (2/9,1/9)$.
In this case, the eavesdropper needs $\approx 3T$  antennas for the secrecy capacity to be zero. We remark that the objective function in~\eqref{eq:myOpt} is not sensitive to
variations in the optimal solution. If fact even if we allocate equal 
number of antennas to the sender and the 
receiver, the eavesdropper needs $\frac{(3+2\sqrt{2})}{2}T \approx 2.9142\times T$ antennas for the secrecy capacity to be zero.

\section*{Acknowledgement}
Ami Wiesel provided a numerical optimizer to evaluate the saddle point expression in Theorem~\ref{thm1}.

\appendices

\section{Least Favorable Noise Property}
\label{app:NoiseRel}
Substituting for $\obKPh$ and $\bHt$ in~\eqref{eq:noiseKKTCondnb} and
carrying out the block matrix multiplication gives
\begin{equation}
\begin{aligned}
\bHr \obKP\bHr^\dagger &= \bUp_1(\bI + \bHr\obKP\bHr^\dagger) + \oPh\bUp_2(\oPh^\dagger + \bHe\obKP\bHr^\dagger)\\
\bHr\obKP\bHe^\dagger &= \bUp_1(\oPh + \bHr\obKP\bHe^\dagger)+\oPh\bUp_2(\bI + \bHe\obKP\bHe^\dagger)\\
\bHe\obKP\bHr^\dagger &= \oPh^\dagger\bUp_1(\bI + \bHr\obKP\bHr^\dagger)+ \bUp_2(\oPh^\dagger +\bHe\obKP\bHr^\dagger)\\
\bHe\obKP\bHe^\dagger &=\oPh^\dagger\bUp_1(\oPh + \bHr\obKP\bHe^\dagger) + \bUp_2(\bI + \bHe\obKP\bHe^\dagger).
\end{aligned}\label{eq:noiseKKTCondn2}\end{equation}

Eliminating $\bUp_1$ from the first and third equation above, we have
\begin{equation}
(\oPh^\dagger \bHr-\bHe)\obKP\bHr^\dagger = (\oPh^\dagger\oPh -\bI)\bUp_2(\oPh^\dagger +\bHe\obKP\bHr^\dagger).
\label{eq:ups21}
\end{equation}
Similarly eliminating $\bUp_1$ from the second and fourth equations in~\eqref{eq:noiseKKTCondn2} we have
\begin{equation}
(\oPh^\dagger\bHr-\bHe)\obKP\bHe^\dagger = (\oPh^\dagger\oPh-\bI)\bUp_2(\bI + \bHe\obKP\bHe^\dagger).
\label{eq:ups22}
\end{equation}
Finally, eliminating $\bUp_2$ from~\eqref{eq:ups21}
and~\eqref{eq:ups22} we obtain \eqref{eq:NoiseCondn}.

\section{KKT Condition}
\label{app:KKT}
First note that,
\begin{equation}
\begin{aligned}
&\nabla_{\bK_\mrm{P}}R_+(\bK_\mrm{P},\obKPh) \\
&= \bHt^\dagger(\bHt\bKP\bHt^\dagger + \obKPh)^{-1}\bHt -
  \bHe^\dagger(\bI + \bHe\bKP\bHe^\dagger)^{-1}\bHe. \label{eq:nabla-exp}
\end{aligned}
\end{equation}

Substituting for $\bHt$ and $\obKPh$ from~\eqref{eq:bHt-def} and~\eqref{eq:oPh-def}, 
\begin{align*}
&(\obKPh + \bHt\obKP\bHt^\dagger)^{-1}\\
&=\begin{bmatrix}\bI + \bHr\obKP\bHr^\dagger & \oPh + \bHr \obKP \bHe^\dagger\\
\oPh^\dagger + \bHr\obKP\bHe^\dagger & \bI + \bHe\obKP\bHe^\dagger\end{bmatrix}^{-1}\\
&=\begin{bmatrix}\bLa^{-1} & -\bLa^{-1}\obTh\\
-\obTh^\dagger\bLa^{-1}  & (\bI\!+\!\bHe\obKP\bHe)^{-1}\!+\!\obTh^\dagger\bLa^{-1}\obTh\end{bmatrix}^{-1},
\end{align*}
where we have used the matrix inversion lemma (e.g.,~\cite{petersenPedersen}), and  $\bLa \defeq
\bLa(\obKP)$  is defined in~\eqref{eq:lamDef}, and  $\obTh$ is
as defined in~\eqref{eq:obTh-def}. Substituting into
\eqref{eq:nabla-exp} and simplifying gives
\begin{align*}
&\nabla_{\bK_\mrm{P}}R_+(\bK_\mrm{P},\obKPh)\biggm|_{\obKP}\\
=&\bHt^\dagger(\obKPh + \bHt\obKP\bHt^\dagger)^{-1}\bHt - \bHe^\dagger(\bI + \bHe\obKP\bHe^\dagger)^{-1}\bHe\\
&\qquad\qquad= (\bHr-\obTh\bHe)^{\dagger}[\bLa(\obKP)]^{-1}(\bHr-\obTh\bHe)
\end{align*}
as required. 


\IEEEtriggeratref{13}
\bibliographystyle{IEEEtran}
\bibliography{sm}

\end{document}